\bfseries\color{black},
\itshape\color{black},columns=fullflexible,
\footnotesize\color{gray},
\newcommand{\tail}{\ensuremath{\texttt{Tail}}}
\newcounter{cntTheorem}
\newtheorem{theorem}{Theorem}[section]
\newtheorem{lemma}[cntTheorem]{Lemma}
\newcommand{\ram}{\ensuremath{\texttt{RAM}}{}}
\newcommand{\cbf}{\ensuremath{\texttt{CBF}}{}}
\newcommand{\dtw}{\ensuremath{\texttt{DTW}}{}}
\newcommand{\erp}{\ensuremath{\texttt{ERP}}{}}
\newcommand{\lcss}{\ensuremath{\texttt{LCSS}}{}}
\newcommand{\frechet}{\ensuremath{\texttt{DK}}{}}
\newcommand{\greedydk}{\ensuremath{\texttt{GDK}}{}}
\newcommand{\dist}{\ensuremath{\texttt{d}}{}}
\newcommand{\lbkeogh}{\ensuremath{\texttt{LB}_\texttt{Keogh}}{}}
\newcommand{\lbhausdorff}{\ensuremath{\texttt{LB}_{\Sigma\texttt{min}}}{}}
\newcommand{\lbbox}{\ensuremath{\texttt{LB}_\texttt{Box}}{}}
\newcommand{\normal}{\ensuremath{\Phi}}
\renewcommand{\geq}{\geqslant}
\renewcommand{\leq}{\leqslant}
\renewcommand{\phi}{\ensuremath{\varphi}}
\renewcommand{\epsilon}{\ensuremath{\varepsilon}}
\newcommand*\colvec[1]{
        \global\colveccount#1
        \begin{pmatrix}
        \colvecnext
}
\def\colvecnext#1{
        #1
        \global\advance\colveccount-1
        \ifnum\colveccount>0
                \\
                \expandafter\colvecnext
        \else
                \end{pmatrix}
        \fi
}
\newcommand{\PP}{\ensuremath{\mathbb{P}}}
\newcommand{\VAR}{\ensuremath{\mathbb{V}}}
\newcommand{\EE}{\ensuremath{\mathbb{E}}}
\newcommand{\NN}{\ensuremath{\mathbb{N}}}
\newcommand{\RR}{\ensuremath{\mathbb{R}}}
\begin{document}

\title{An Analytical Approach to Improving Time Warping on Multidimensional Time Series}

\author[1]{Jörg P. Bachmann}
\author[2]{Johann-Christoph Freytag}
\affil[1]{ \texttt{joerg.bachmann@informatik.hu-berlin.de}}
\affil[2]{ \texttt{freytag@informatik.hu-berlin.de}}
\affil[1,2]{Humboldt-Universität zu Berlin, Germany}

\date{\today}

\maketitle

\begin{abstract}
    Dynamic time warping (\dtw{}) is one of the most used distance functions to compare time series, e.\,g. in nearest neighbor classifiers.
    Yet, fast state of the art algorithms only compare 1-dimensional time series efficiently.
    One of these state of the art algorithms uses a lower bound (\lbkeogh) introduced by E. Keogh to prune \dtw{} computations.
    We introduce \lbbox{} as a canonical extension to \lbkeogh{} on multi-dimensional time series.
    We evaluate its performance conceptually and experimentally and show that an alternative to \lbbox{} is necessary for multi-dimensional time series.
    We also propose a new algorithm for the dog-keeper distance (\frechet) which is an alternative distance function to \dtw{} and show that it outperforms \dtw{} with \lbbox{} by more than one order of magnitude on multi-dimensional time series.
\end{abstract}

\section{Introduction}

Multimedia retrieval is a common application which requires finding similar objects to a query object.
We consider examples such as gesture recognition with modern virtual reality motion controllers, GPS tracking, speech recognition, and classification of handwritten letters where the objects are multi-dimensional time series.

In many cases, similarity search is performed using a distance function on the time series, where small distances imply similar time series.
A \emph{nearest neigbor query} to the query time series can be a $k$-nearest neighbor ($k$-NN) query or an $\epsilon$-nearest neighbor ($\epsilon$-NN) query.
A $k$-NN query retrieves the $k$ most similar time series.
An $\epsilon$-NN query retrieves all time series with a distance of at most $\epsilon$.

A common application of $k$-NN queries is solving classification tasks in machine learning \cite{kNN1,Clustering}.
For example, if all data objects (here time series) are labeled, a $k$-NN classifier basically assigns the label of a nearest neighbor to the unlabeled query object.

In our examples, the time series of the same classes, (e.\,g., same written character or same gestures) follow the same path in space, but have some temporal displacements.
Tracking the GPS coordinates of two cars driving the same route from A to B is an illustrative example.
We want these tracks (i.\,e. 2-dimensional time series) to be recognized as similar, although driving style, traffic lights, and traffic jams might result in temporal differences.
\emph{Time warping distance functions} such as dynamic time warping (\dtw)~\cite{DTWSakoe}, the dog-keeper distance (\frechet)~\cite{computingfrechet}, and the edit distance with real penalties (\erp)~\cite{ERP} respect this semantic requirement.
They map pairs of time series representing the same trajectory to small distances.

We are interested in fast algorithms for $k$-NN queries based on these time series distance functions.
Unfortunately, time warping distance functions usually have quadratic runtime~\cite{FrechetNoSubquadratic,DTWNoSubquadratic,EditNoSubquadratic}.

However, we are usually not interested in the exact distance values but in the set of the nearest neighbors to our query time series.
During such a nearest neighbor query, a common approach for improving the runtime is pruning as much distance computations as possible using lower bounds to the distance function:
If the lower bound exceeds a certain threshold already (e.\,g. the largest distance to our $k$ nearest neighbors found so far), the expensive distance also yields a larger value and thus, its computation can be skipped.

E.~Keogh proposed one of the state of the art algorithms for nearest neighbor queries with \dtw{} on 1-dimensional time series.
His main contribution is the lower bound \lbkeogh{}~\cite{Trillion,ExactIndexingDTW} pruning many expensive \dtw{} computations during a linear scan.

\subsection{Contributions}

The contributions of this paper are the following:
\begin{itemize}
    \item
        We introduce \lbbox{} as canonical extension to \lbkeogh{} for multi-dimensional time series in Section~\ref{sec:lbbox}, i.\,e. \lbbox{} is a lower bound to \dtw{} on multi-dimensional time series and equals \lbkeogh{} on 1-dimensional time series.
    \item
        In Section~\ref{sec:lbbox}, we theoretically show, that the canonical extension \lbbox{} suffers from an effect similar to the curse of dimensionality.
        In Section~\ref{sec:evaluation}, we confirm the theoretical results with experiments on two synthesized and several real world data sets.
    \item
        In Section~\ref{sec:dogkeeper}, we propose a new algorithm to compute the \frechet{} distance working faster than \dtw{} with \lbbox{} by more than one order of magnitude on multi-dimensional time series.
        The comparisons to \dtw{} with \lbbox{} in terms of accuracy and computation time are in Section~\ref{sec:evaluation}.
\end{itemize}

We first declare basic notation and preliminaries in Section~\ref{sec:notation}~and~\ref{sec:relatedwork}, respectively.

\subsection{Basic Notation}
\label{sec:notation}

We denote the natural numbers including zero with $\NN$ and the real numbers with $\RR$.
Elements of a $d$-dimensional vector $v\in\RR^d$ are accessed using subindices, i.\,e. $v_3$ is the third element of the vector.
Sequences (here also called time series) are usually written using capital letters, e.\,g. $S=(s_1,\cdots,s_n)$ is a sequence of length $n$.
Suppose $s_i\in\RR^d$, then $s_{i,j}$ denotes the $j$-th element of the $i$-th vector in the sequence $S$.
The projection to the $j$-th dimension is denoted via $S^j$, i.\,e. $S^j=(s_{1,j},\cdots,s_{n,j})$.
The Euclidean norm of a vector $v$ is denoted via $\|v\|_2$, thus $\|v-w\|_2$ denotes the Euclidean distance between $v$ and $w\in\RR^d$.
In general, we denote distance functions via $\dist$.

We denote $[l,u]\subset\RR$ as the 1-dimensional interval from $l\in\RR$ to $u\in\RR$.
We denote the cartesian product using the symbol $\bigotimes$, thus $\bigotimes_{1\leq j\leq d}[l_i,u_i]$ denotes the set of vectors $v$ with $l_j\leq v_j\leq u_j$ for $1\leq j\leq d$.

For a random variable $X$ over $\RR$, we denote the mean with $\EE[X]$, its variance using $\VAR[X]$, and the probability measure with $\PP$, i.\,e. $\PP[X<a]$ is the probability that the value of $X$ is less than $a\in\RR$.
We denote the standard normal distribution with $\normal_{0,1}$.

\section{Preliminaries}
\label{sec:relatedwork}

\subsection{Time Warping Distance Functions}

Dynamic time warping (\dtw) is a distance function on time series~\cite{DTWSakoe}.
Its benefit is a dynamic time alignment thus it is robust against time distortion or temporal displacements.
Distance functions on time series which are robust against time distortion are from hereon called \emph{time warping distance} functions.

For a formal definition, let $S=\left( s_1,\cdots,s_m \right)$ and $T=\left( t_1,\cdots,t_n \right)$ be two time series and $\dist$ a distance function for the elements of the time series.
\dtw{} is defined recursively:
\begin{align*}
    \dtw(S,()) =&\  \infty \quad \dtw((),T) = \infty \\
    \dtw((s),(t)) &= \dist(s,t)^2 \\
    \dtw(S,T) =& \ d(s_1,t_1)^2+\\
    &\quad\quad\min\begin{cases}
        \dtw( \tail(S), \tail(T) ) \\
        \dtw( S, \tail(T) ) \\
        \dtw( \tail(S), T )
    \end{cases}
\end{align*}
where $\tail(T)\coloneqq\left( t_2,\cdots,t_n \right)$.
The dog-keeper distance (\frechet) is similar to \dtw{}.
It only differs by taking the maximum distance along a warping path instead of the sum.
\begin{align*}
    &\frechet(S,()) =\ \infty \quad \frechet( (), T) = \infty \\
    &\frechet( (s), (t) ) = \dist(s,t) \\
    &\frechet(S,T) =\\
    &\quad \min\begin{cases}
        \max\left\{\dist(s_1,t_1),\frechet( \tail(S), \tail(T) ) \right\} \\
        \max\left\{\dist(s_1,t_1),\frechet( S, \tail(T) ) \right\} \\
        \max\left\{\dist(s_1,t_1),\frechet( \tail(S), T ) \right\}
    \end{cases}
\end{align*}
Further examples for time warping distance functions are \lcss{}~\cite{LCSS} and \erp{}~\cite{ERP}.

Well known algorithms computing \dtw{} and \frechet{} in quadratic time exist~\cite{DTWSakoe,computingfrechet}.
These algorithms first build the cross product of both time series $S$ and $T$ using the distance function on the elements.
The resulting \emph{distance matrix} consists of entries $\dist(s_i,t_j)^2$ where $\dist(s_1,t_1)^2$ is in the bottom left cell and $\dist(s_m,t_n)^2$ is in the top right cell.
After that, the algorithms compute the cheapest path from the bottom left to the top right cell by cell.
For each cell, they replace the value with a combination of that value and the smallest value of one of the possible predecessors: the left, the lower, and the left lower neighbor cell, yielding the \emph{warping matrix}.
Unfortunately, any algorithm computing any of these distance functions has quadratic runtime complexity in worst case~\cite{DTWNoSubquadratic,FrechetNoSubquadratic}.

\subsection{Sakoe Chiba Band}

The Sakoe Chiba band changes the semantic of \dtw{} by constraining the possible paths in the warping matrix to a diagonal band of a certain bandwidth~\cite{DTWSakoe}.
Thus, the warping of the time is constrained.
On the other hand, the computation time is decreased since a huge part of the distance matrix does not need to be considered.
Still, the runtime complexity remains quadratic because only a fixed ratio of the distance matrix is ``cut off''.

\subsection{Keoghs Lower Bound for DTW}
\label{sec:lbkeogh}

Cheap lower bounds are used to prune many complete \dtw{} computations and thus improve the overall computation time of a nearest neighbor search:
If the lower bound already exceeds a desired threshold, then the final result of the expensive \dtw{} distance function will be larger as well.
Keogh proposed one of the most successfull lower bounding functions to \dtw{}~\cite{ExactIndexingDTW,Trillion}.
His lower bound \lbkeogh{} depends on the Sakoe Chiba Band which constrains finding best time alignments to a maximum time distance of a certain band width.

\begin{figure}
    \centering
    \includegraphics[width=.49\linewidth]{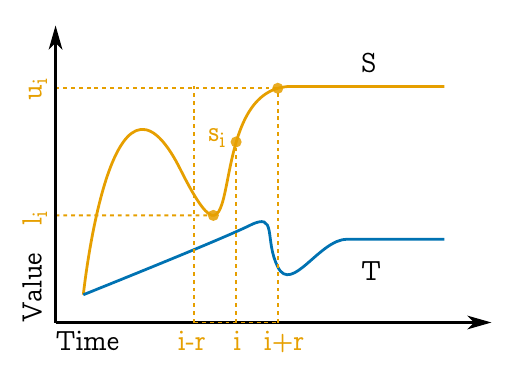}
    \includegraphics[width=.49\linewidth]{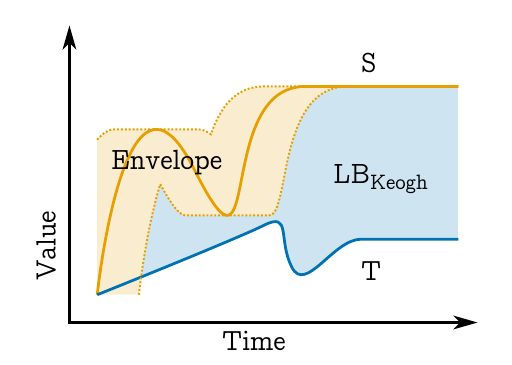}
    \caption{The envelope functions (computation sketched on the left) are used to compute the \lbkeogh{} lower bound (right).}
    \label{fig:lbkeogh}
\end{figure}

Basically, the idea of his lower bound is the following (cf. Figure~\ref{fig:lbkeogh} for an illustration):
Consider two time series $S=\left( s_1,\cdots,s_m \right)$ and $T=\left( t_1,\cdots,t_n \right)$ and map each $s_i\in S$ to the interval of all possible aligned values within the time range (i.\,e. Sakoe Chiba band) $i-r$ and $i+r$:
\begin{align*}
    l_i &\coloneqq \min\left\{ t_{i-r},\cdots,r_{i+r} \right\} \\
    u_i &\coloneqq \max\left\{ t_{i-r},\cdots,r_{i+r} \right\}
\end{align*}
Summing up the square distances of $s_i$ to each interval $[l_i,u_i]$ is a lower bound to \dtw{} since \dtw{} aligns each $s_i$ to at least one of the values within $[l_i, u_i]$, i.\,e.:
\begin{align*}
    \lbkeogh( S, T ) &\coloneqq \sum_{i=1}^{m} \dist\left( s_i, \left[ l_i, u_i \right] \right)^2 \leq \dtw(S,T)
\end{align*}
where
\begin{align*}
    \dist\left( s_i, \left[ l_i, u_i \right] \right) &\coloneqq \begin{cases}
        s_i-l_i & \text{iff } s_i < l_i \\
        u_i-s_i & \text{iff } u_i < s_i \\
        0 & \text{else}
    \end{cases}
\end{align*}

The computation of the intervals $[l_i,u_i]$ takes linear time when using the algorithm of Daniel Lemire~\cite{Lemire}.
Hence, it is obvious that the computation of \lbkeogh{} is linear in the length of the time series as well.

\section{Multi Dimensional Time Series}
\label{sec:lbbox}

Consider two time series $S=\left( s_1,\cdots,s_m \right)$ and $T=\left( t_1,\cdots,t_n \right)$ with $s_i,t_j\in\RR^{k}$, i.\,e. $S$ and $T$ are multi-dimensional time series, and let
\begin{align*}
    \dist( s_i,t_j ) &\coloneqq \|s_i-t_j\|_2
\end{align*}
be the Euclidean distance of the two vectors $s_i$ and $t_j$.
We extend the lower bound \lbkeogh{} of \dtw{} canonically using the interpretation presented in Section~\ref{sec:lbkeogh}:
For each $s_i$ we find the minimal bounding box to all values $\left\{ t_{i-r},\cdots,t_{i+r} \right\}$.
Summing up the square distances of $s_i$ to their bounding boxes is again a lower bound to \dtw.

For a formal definition, let
\begin{align*}
    B_i &\coloneqq \bigotimes_{1\leq j\leq k} \left[ l_{i,j},u_{i,j} \right] \\
    &= \bigotimes_{1\leq j\leq k}\left[ \min_{-r\leq \ell\leq r} t_{i+\ell,j}, \max_{-r\leq \ell\leq r} t_{i+\ell,j} \right]
\end{align*}
be the minimal axis aligned bounding box of $t_{i-r},\cdots,t_{i+r}$.
The vectors $l_i$ and $u_i$ are the lower left and upper right corners, respectively.
The distance of a vector to the bounding box is the distance to the nearest element within the bounding box:
\begin{align*}
    \dist\left( s_i,B_i \right)^2 &\coloneqq \min_{t\in B_i}\dist\left( s_i,B_i \right)^2 \\
    &= \min_{t\in B_i} \| s_i-t \|^2_2 \\
    &= \sum_{1\leq j\leq k} \dist\left( s_i, \left[ l_{i,j},u_{i,j} \right] \right)^2
\end{align*}
Considering the definition of \dtw, the following function is a lower bound to \dtw{}:
\begin{align*}
    \label{eq:lbbox}
    \dtw(S,T) &\geq \\
    \lbhausdorff(S,T) &\coloneqq \sum_{i=1}^m \min_{-r\leq\ell\leq r} d\left( s_i, t_{i+\ell} \right)^2 \\
\end{align*}
The distance function \lbhausdorff{} is hard to compute but can be estimated:
\begin{align*}
    \lbhausdorff(S,T) \geq \lbbox(S,T) &\coloneqq \sum_{i=1}^m \dist\left( s_i, B_i \right)^2 \\
    &\hspace{-3cm}= \sum_{i=1}^m \sum_{j=1}^k \dist\left( s_i, \left[ l_{i,j},u_{i,j} \right] \right)^2 \\
    &\hspace{-3cm}= \sum_{j=1}^k \lbkeogh\left( S^j, T^j \right) \addtocounter{equation}{1}\tag{\theequation}
\end{align*}
As it turns out, the estimation  is not only a canonical extension of the lower bound proposed by Keogh, but it also can be computed by using his proposed algorithm on the different dimensions of the time series.
Please note, that $\lbbox \equiv \lbkeogh$ holds especially for 1-dimensional time series.

\paragraph*{Curse Of Dimensionality}

\begin{figure}
    \centering
    \includegraphics[width=.85\linewidth]{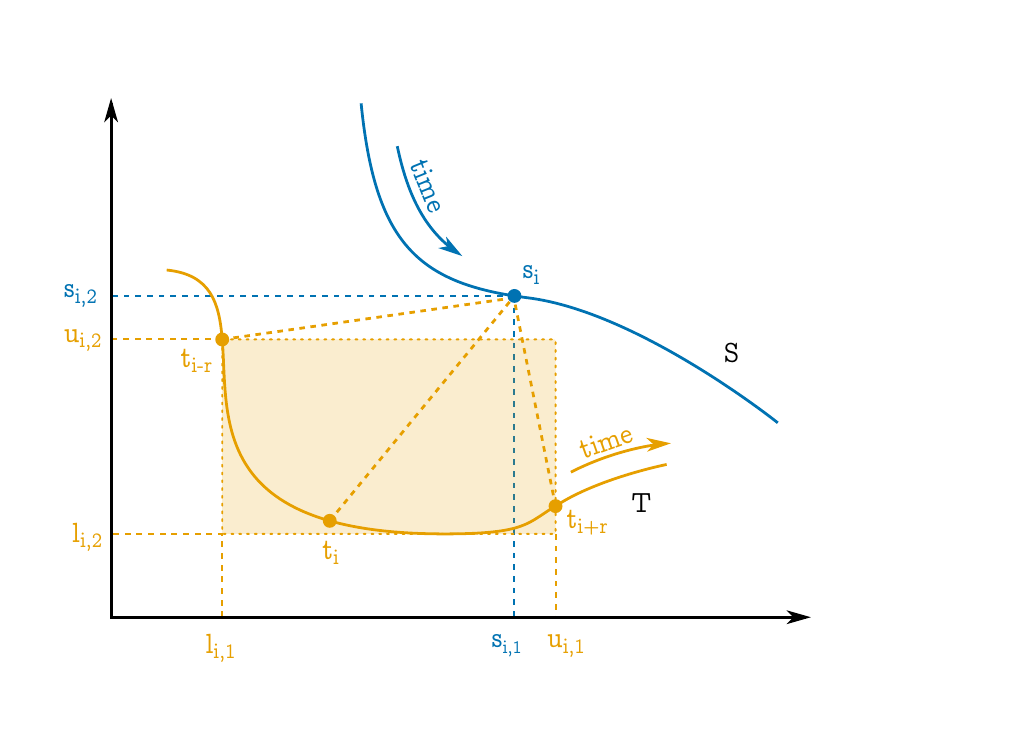}
    \caption{Two 2-dimensional time series.
        \dtw{} with Sakoe Chiba Band compares $q_i$ to at least one of $t_{i-r},\cdots,t_{i+r}$.
        \lbbox{} compares $q_i$ to the corresponding bounding box $[l_{i,1},u_{i,1}]\otimes[l_{i,2},u_{i,2}]$.}
    \label{fig:lbbox_issue}
\end{figure}

Consider the bounding box of a subsequence $\left( t_{i-r},\cdots,t_{i+r} \right)$ as illustrated in Figure~\ref{fig:lbbox_issue}.
For 2-dimensional time series, the following situation might happen:
The time series $T$ moves along the left and then along the bottom edge of the bounding box.
The query point $s_i$ however is at the top right vertex of the bounding box.
Hence, a perfect alignment of \dtw{} would still result in a distance
\begin{align*}
    \lbhausdorff(S,T) &= \cdots+\min_{-r\leq\ell\leq r} \dist\left( s_i,t_{i+\ell} \right)^2+\cdots > 0
\end{align*}
Simplifying the distance function to the bounding box results in
\begin{align*}
    \lbbox(S,T) &= \cdots+ \sum_{j=1}^k \dist\left( s_i, \left[ l_{i,j},u_{i,j} \right] \right)^2 + \cdots \\
    &= \cdots + 0 + \cdots = 0
\end{align*}
Thus, there is a clear divergence of \dtw{} and \lbbox.
With increasing dimensionality, there is more space for the time series to sneak past the query point, i.\,e. the probability for this situation to happen increases.
For this reason, we claim that the \emph{tightness} (i.\,e. the ratio $\frac{\lbbox}{\dtw}$) of the lower bound gets worse with increasing dimensionality.
This effect is similar to that of the \textit{Curse of Dimensionality} \cite{SearchingInMetricSpaces}, thus we still call it the same.

The goal of this section is the proof of Theorem~\ref{thm:curseofdimensionality} which claims the existence of the Curse of Dimensionality on \lbbox.
Since we want to assume as little as possible from the data sets, we assume that the time series consist of independent and identical distributed elements.
Section~\ref{sec:evaluation} confirms the theoretical results experimentally.
For proving Theorem~\ref{thm:curseofdimensionality}, we first need some technical lemmata.

\begin{lemma}
    \label{lem:helpit}
    Let $D_{\ell,j}$ be independent identically distributed random variables for $1\leq\ell\leq r$ and $1\leq j\leq k$.
    Furthermore, let $\mu_r\coloneqq\EE\left[ \min_{1\leq\ell\leq r} D_{\ell,j} \right] < \mu\coloneqq\EE\left[ D_{\ell,j} \right]$.
    Then,
    \begin{align*}
        \frac{\EE\left[ \sum_{j=1}^k \min_{1\leq\ell\leq r} D_{\ell,j} \right]}{\EE\left[\min_{1\leq\ell\leq r} \sum_{j=1}^k D_{\ell,j} \right]} \rightarrow \frac{\mu_r}{\mu} < 1 \ \text{ for }k\rightarrow\infty.
    \end{align*}
\end{lemma}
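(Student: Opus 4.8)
The plan is to evaluate numerator and denominator separately and to observe that the numerator grows \emph{exactly} linearly in $k$, while the denominator is \emph{asymptotically} linear in $k$ but with leading constant $\mu$ rather than $\mu_r$; the ratio then converges to $\mu_r/\mu$.

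First I would dispose of the numerator. By linearity of expectation together with the assumption that the $D_{\ell,j}$ are identically distributed in $j$,
\[
    \EE\left[\sum_{j=1}^k \min_{1\le\ell\le r} D_{\ell,j}\right] = \sum_{j=1}^k \EE\left[\min_{1\le\ell\le r} D_{\ell,j}\right] = k\,\mu_r ,
\]
so no asymptotics are needed here at all.

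The real work is the denominator. I would fix the summation variable and set $X_\ell \coloneqq \sum_{j=1}^k D_{\ell,j}$ for $1\le\ell\le r$. Since all the $D_{\ell,j}$ are i.i.d., the variables $X_1,\dots,X_r$ are independent, and each $X_\ell$ is a sum of $k$ i.i.d. summands of mean $\mu$. By the strong law of large numbers, $X_\ell/k\to\mu$ almost surely for each $\ell$, and because the minimum of finitely many convergent sequences converges to the minimum of the limits, $\tfrac1k\min_{\ell} X_\ell\to\mu$ almost surely. The delicate point, which I expect to be the main obstacle, is to upgrade this almost-sure statement to convergence of the \emph{expectations}, i.e. to show $\tfrac1k\EE\!\left[\min_\ell X_\ell\right]\to\mu$. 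Here I would exploit that the $D_{\ell,j}$ are squared distances and hence non-negative: then $\min_\ell X_\ell\ge 0$, so Fatou's lemma gives $\liminf_k \tfrac1k\EE[\min_\ell X_\ell]\ge\mu$, while the trivial bound $\min_\ell X_\ell\le X_1$ together with $\EE[X_1]=k\mu$ gives $\tfrac1k\EE[\min_\ell X_\ell]\le\mu$ for every $k$. Combining the two bounds yields $\tfrac1k\EE[\min_\ell X_\ell]\to\mu$.

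Finally I would assemble the ratio,
\[
    \frac{\EE\!\left[\sum_{j=1}^k \min_\ell D_{\ell,j}\right]}{\EE\!\left[\min_\ell \sum_{j=1}^k D_{\ell,j}\right]} = \frac{k\,\mu_r}{\EE[\min_\ell X_\ell]} = \frac{\mu_r}{\tfrac1k\EE[\min_\ell X_\ell]} \longrightarrow \frac{\mu_r}{\mu},
\]
with $\mu_r/\mu<1$ by the hypothesis $\mu_r<\mu$. The only step requiring care is the interchange of limit and expectation in the denominator; should one prefer not to assume non-negativity, the same conclusion follows from uniform integrability of the family $\{\min_\ell X_\ell/k\}_k$, which holds as soon as the $D_{\ell,j}$ have finite variance, since then the family is bounded in $L^2$. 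Intuitively, the numerator is the mean of a minimum taken \emph{before} summing over the $k$ coordinates, so the gain $\mu_r$ survives in every coordinate, whereas in the denominator the summation over $k$ coordinates averages out the fluctuations, so the minimum over $\ell$ of the nearly-deterministic sums contributes only an $O(\sqrt{k})$ correction and the per-coordinate mean reverts to $\mu$.
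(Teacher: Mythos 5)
Your proof is correct, but it takes a genuinely different route from the paper's. Both arguments treat the numerator identically (linearity gives exactly $k\mu_r$), and both reduce the problem to showing $\tfrac{1}{k}\EE\left[\min_{1\leq\ell\leq r} X_\ell\right]\to\mu$ for $X_\ell=\sum_{j=1}^k D_{\ell,j}$; they diverge on how to get this. The paper proceeds quantitatively: Markov's inequality at the level $a=k\mu$, followed by the Berry--Esseen theorem to control $\PP\left[X_1<k\mu\right]$, which requires a finite third moment $\rho=\EE\left[|D_{\ell,j}|^3\right]$ and produces an explicit $O(k^{-1/2})$ error term. You proceed qualitatively: the strong law of large numbers plus the fact that a minimum over the fixed finite index set $1\leq\ell\leq r$ commutes with limits gives $\tfrac{1}{k}\min_\ell X_\ell\to\mu$ almost surely, Fatou's lemma (using non-negativity) upgrades this to $\liminf_k \tfrac{1}{k}\EE\left[\min_\ell X_\ell\right]\geq\mu$, and the trivial bound $\min_\ell X_\ell\leq X_1$ supplies the matching upper bound. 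What each buys: the paper's route, carried out carefully, yields a rate of convergence, whereas yours needs only a finite first moment and no normal approximation. Your route is also tighter on rigor in two respects. First, you prove both directions: the paper's displayed chain only bounds the ratio from above, so by itself it yields $\limsup\leq\mu_r/\mu$; the matching lower bound on the ratio (precisely your step $\min_\ell X_\ell\leq X_1$) is left implicit there. Second, you sidestep the bookkeeping where the paper's derivation actually stumbles: the event $\min_\ell X_\ell\geq a$ is the conjunction $\bigwedge_\ell \{X_\ell\geq a\}$, not the disjunction written in the paper, and at $a=k\mu$ the term $\Phi_{0,1}(0)=\tfrac12$ does not vanish, so the printed bound $k\mu\left(1-\left(C\rho/(\sigma^3\sqrt{k})\right)^r\right)$ does not follow as written; repairing it requires choosing $a=k(\mu-\epsilon)$, which amounts to a weak-law version of your argument. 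One caveat applies to both proofs: the lemma as stated never assumes $D_{\ell,j}\geq 0$, and both Markov and Fatou need it; you flag this explicitly, and your uniform-integrability fallback under finite variance is a correct fix that the paper does not provide.
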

\begin{proof}
    Let $\mu\coloneqq\EE\left[ D_{\ell,j} \right]$ be the mean and $\sigma^2\coloneqq\VAR\left[ D_{\ell,j} \right]$ be the variance of the identically distributed variables $D_{\ell,j}$.
    The following inequation holds using calculation rules for expected values:
    \begin{align}
        \label{eq:helpit1}
        \EE\left[ \sum_{j=1}^k \min_{1\leq\ell\leq r} D_{\ell,j} \right] &= k\cdot\EE\left[ \min_{1\leq\ell\leq r} D_{\ell,1} \right] = k\cdot\mu_r
    \end{align}
    Let $\rho\coloneqq\EE\left[ \left|D_{\ell,j}^3\right| \right]$.
    Since we only consider finite data sets for our nearest neighbour queries, $\rho<\infty$ holds obviously.
    For a theoretical analysis on data sets with infinite many elements, this property is a necessary preliminary in order to estimate the denominator using the Berry-Esseen Theorem.
    Basically, the Berry-Esseen Theorem claims that the sum of random variables converges to a normal distribution for increasing number of summands.
    We also apply the well known Markov inequality to estimate the denominator:
    \begin{align*}
        \label{eq:helpit2}
        &\EE\left[ \min_{1\leq\ell\leq r} \sum_{j=1}^k D_{\ell,j} \right] \geq a\cdot \PP\left[ \min_{1\leq\ell\leq r}\sum_{j=1}^k D_{\ell,j} \geq a \right] \\
        &= a\cdot \PP\left[ \bigvee_{1\leq\ell\leq r} \sum_{j=1}^k D_{\ell,j}\geq a \right] \\
        &= a\cdot \left( 1-\PP\left[ \bigwedge_{1\leq\ell\leq r} \sum_{j=1}^k D_{\ell,j} < a \right] \right) \\
        &= a\cdot \left( 1-\PP\left[ \sum_{j=1}^{k} D_{1,j}<a \right]^r \right) \\
        &= a\cdot \left( 1-\PP\left[ \frac{\sum_{j=1}^k D_{1,j} - k\cdot\mu}{\sigma\sqrt{k}} < \frac{a-k\cdot\mu}{\sigma\sqrt{k}} \right]^r \right) \\
        &\geq a\cdot\left( 1-\left(\Phi_{0,1} \left( \frac{a-k\cdot\mu}{k\sqrt k} \right)+\frac{C\cdot\rho}{\sigma^3\cdot\sqrt{k}}\right)^r \right) \\
        &= k\cdot\mu\cdot\left( 1-\left(\frac{C\cdot\rho}{\sigma^3\cdot\sqrt{k}}\right)^r \right) \text{ for }a=k\cdot\mu \addtocounter{equation}{1}\tag{\theequation}
    \end{align*}
    where the first inequation is the Markov inequality and the second inequation is derived from the Berry-Esseen Theorem which also claims the existence of the constant $C$.

    Putting together Equation~(\ref{eq:helpit1}) and Inequation~(\ref{eq:helpit2}), we achieve the desired convergence:
    \begin{align*}
        \frac{\EE\left[ \sum_{j=1}^k \min_{1\leq\ell\leq r} D_{\ell,j} \right]}{\EE\left[ \min_{1\leq\ell\leq r} \sum_{j=1}^k D_{\ell,j} \right]} &\leq \frac{k\cdot\mu_r}{k\cdot\mu\cdot\left( 1-\left(\frac{C\cdot\rho}{\sigma^3\cdot\sqrt{k}}\right)^r \right)} \\
        &\hspace{-2cm}\longrightarrow \frac{\mu_r}{\mu} \text{ for }k\rightarrow\infty
    \end{align*}
\end{proof}

For $r=R\cdot n$ ($0<R\leq 1$), the following Lemma~\ref{lem:tozero} shows that the value $\frac{\mu_r}{\mu}$ from Lemma~\ref{lem:helpit} converges to $0$ for $n\rightarrow\infty$.

\begin{lemma}
    \label{lem:tozero}
    Let $X_{i}\geq 0$ be independent identically distributed random variables for $1\leq i\leq n$ such that $\PP\left[ X_i\geq x \right]<1$ for each $x>0$.
    Then, $\EE\left[ \min_{1\leq i\leq n}X_i \right] \longrightarrow 0$ for $n\rightarrow\infty$.
\end{lemma}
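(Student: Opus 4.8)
The plan is to exploit the monotonicity of the minimum in $n$ together with the tail (layer-cake) formula for the expectation of a nonnegative random variable. Write $M_n \deff \min_{1\leq i\leq n} X_i$ and let $p(x)\deff\PP[X_1\geq x]$ denote the common survival function. Since the $X_i$ are nonnegative, $M_n\geq 0$; and since enlarging the index set can only decrease the minimum, the sequence $M_n$ is non-increasing in $n$ and therefore converges almost surely to some limit $M_\infty\geq 0$. The whole argument reduces to showing that $M_\infty=0$ and that the expectations inherit this limit.

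First I would identify the almost-sure limit as $0$. By independence and identical distribution, for every $x>0$ we have $\PP[M_n\geq x]=\PP[\bigwedge_{1\leq i\leq n} X_i\geq x]=p(x)^n$, and the hypothesis $p(x)<1$ forces $p(x)^n\to 0$ as $n\to\infty$. Hence $\PP[M_\infty\geq x]\leq\inf_n p(x)^n=0$ for every $x>0$. Writing $\{M_\infty>0\}$ as the countable union of the events $\{M_\infty\geq x\}$ for $x\downarrow 0$ and using subadditivity yields $\PP[M_\infty>0]=0$, i.e. $M_\infty=0$ almost surely.

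Second, I would upgrade this almost-sure statement to convergence of the expectations. Using the tail representation $\EE[M_n]=\int_0^\infty \PP[M_n\geq x]\,dx=\int_0^\infty p(x)^n\,dx$, the integrand tends to $0$ pointwise on $(0,\infty)$ and satisfies $p(x)^n\leq p(x)$ for every $n\geq 1$; equivalently, the variables $M_n$ decrease to $0$ and obey $M_n\leq M_1=X_1$. Either formulation delivers $\EE[M_n]\longrightarrow 0$ as soon as the dominating object is integrable, by dominated convergence (equivalently, by the monotone convergence theorem for the decreasing sequence $M_n\downarrow 0$ with integrable majorant).

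The step I expect to be the main obstacle is precisely this integrability, since the pointwise convergence $p(x)^n\to 0$ does not by itself control the tail of the integral. This is harmless in our setting: the relevant $X_i$ arise as squared Euclidean distances over a finite data set and are therefore bounded, so in particular $\EE[X_1]<\infty$ and the majorant $M_1$ is integrable. (In greater generality one would instead dominate $M_n$ for $n\geq N_0$ by $M_{N_0}$ and require a moment condition ensuring $\EE[M_{N_0}]<\infty$; the boundedness coming from the application makes this refinement unnecessary here.) Under this boundedness the chosen convergence theorem applies directly and completes the proof, so I anticipate the only care needed is in invoking the correct limit theorem rather than in any quantitative estimate.
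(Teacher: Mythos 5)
Your proof is correct and follows essentially the same route as the paper's: the tail (layer-cake) representation $\EE\left[\min_{1\leq i\leq n} X_i\right]=\int_0^\infty \PP\left[X_1\geq x\right]^n\,dx$ combined with the pointwise convergence $\PP\left[X_1\geq x\right]^n\to 0$ for every $x>0$. The one substantive difference is that you make explicit what the paper silently assumes: passing the limit inside the integral requires an integrable majorant (your $M_1=X_1$, i.e.\ $\EE[X_1]<\infty$, or boundedness of the $X_i$), and this is not a formality --- without such a condition the lemma as literally stated is false. For instance, if the common survival function is $p(x)=\left(1+\log(1+x)\right)^{-1}$, then $p(x)<1$ for every $x>0$, yet $\int_0^\infty p(x)^n\,dx=\infty$ for every $n$, so $\EE\left[\min_{1\leq i\leq n}X_i\right]=\infty$ for all $n$. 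The paper's proof jumps directly from measurability and pointwise convergence of the integrand to convergence of the integral, which is precisely the unjustified interchange you flag; your appeal to the boundedness of the $X_i$ in the intended application (squared distances over a finite data set, mirroring the paper's own finiteness remark about the third moment $\rho$ in Lemma~\ref{lem:helpit}) is the correct way to close that gap.
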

\begin{proof}
    The expected value can be calculated as follows:
    \begin{align*}
        \EE\left[ \min_{1\leq i\leq n} X_i \right] &= \int_0^\infty \PP\left[ \min_{1\leq i\leq n} X_i \geq x \right] dx \\
        &= \int_0^\infty \PP\left[ X_1\geq x\wedge \cdots \wedge X_n\geq x \right] dx \\
        &= \int_0^\infty \PP\left[ X_i\geq x \right]^n dx
    \end{align*}
    Since $\PP\left[ X_i\geq x \right]$ is Lebesgue measurable and $\PP\left[ X_i\geq x \right]^n\longrightarrow 0$ for $x>0$, the last integral converges to zero for $n\rightarrow\infty$.
\end{proof}

Now, we prove that \lbbox{} suffers from the curse of dimensionality using Lemma~\ref{lem:helpit} and \ref{lem:tozero}, i.\,e. the tightness of the lower bound \lbbox{} to \dtw{} gets worse for increasing dimensionality.
\begin{theorem}
    \label{thm:curseofdimensionality}
    Let $S$ and $T$ be two time series in $\RR^k$ with length $n$.
    Then
    \begin{align*}
        \frac{\EE\left[\lbbox(S,T)\right]}{\EE\left[\dtw(S,T)\right]} \longrightarrow 0
    \end{align*}
    for $k,n\rightarrow\infty$.
\end{theorem}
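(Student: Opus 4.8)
The plan is to sandwich the target ratio between the two lower bounds already at our disposal and then route the resulting expression through Lemmata~\ref{lem:helpit} and~\ref{lem:tozero}. Recall the chain $\dtw(S,T)\geq\lbhausdorff(S,T)\geq\lbbox(S,T)$ established above; by monotonicity of expectation $\EE[\dtw(S,T)]\geq\EE[\lbhausdorff(S,T)]$, so it suffices to prove
\[
\frac{\EE\left[\lbbox(S,T)\right]}{\EE\left[\lbhausdorff(S,T)\right]}\longrightarrow 0.
\]
This trades the intractable $\dtw$ in the denominator for $\lbhausdorff$, whose $i$-th summand is precisely a ``min of a sum'' and therefore matches the denominator appearing in Lemma~\ref{lem:helpit}.

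Next I would reduce everything to a single timestep. Fixing an index $i$ and writing $D_{\ell,j}\coloneqq(s_{i,j}-t_{i+\ell,j})^2$, the $i$-th summand of $\lbhausdorff$ equals $\min_{-r\leq\ell\leq r}\sum_{j=1}^k D_{\ell,j}$, the Lemma~\ref{lem:helpit} denominator. For $\lbbox$ the key observation is that every sample $t_{i+\ell,j}$ lies inside $[l_{i,j},u_{i,j}]$ by construction of $l_{i,j}$ and $u_{i,j}$, so the distance of $s_{i,j}$ to that interval is no larger than its distance to the nearest sample, giving after squaring
\[
\dist\left(s_i,\left[l_{i,j},u_{i,j}\right]\right)^2\leq\min_{-r\leq\ell\leq r}D_{\ell,j}.
\]
Hence the $i$-th summand of $\lbbox$ is bounded by $\sum_{j=1}^k\min_{-r\leq\ell\leq r}D_{\ell,j}$, the Lemma~\ref{lem:helpit} numerator. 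Because the elements are identically distributed, every timestep contributes the same expectation and the common factor (the number of timesteps) cancels in the ratio; each admissible band moreover contains between $r+1$ and $2r+1$ points, so the number of terms under every minimum grows without bound as $n$ does, which will let me control all indices uniformly, including those near the boundary.

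With this setup, Lemma~\ref{lem:helpit} applies directly: for fixed $r$, as $k\to\infty$ the ratio of the two expectations tends to $\mu_r/\mu<1$ with $\mu_r=\EE[\min_\ell D_{\ell,1}]$ and $\mu=\EE[D_{\ell,1}]$. To upgrade the constant $\mu_r/\mu$ to $0$, I would invoke Lemma~\ref{lem:tozero} with $r=R\cdot n$: the per-coordinate squared differences are nonnegative and non-degenerate, so $\PP[D_{\ell,1}\geq x]<1$ for every $x>0$ and therefore $\mu_r\to 0$, whence $\mu_r/\mu\to 0$ as $n\to\infty$.

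The hard part will be making the two limits cooperate, since Lemma~\ref{lem:helpit} is a statement for fixed $r$ as $k\to\infty$ while Lemma~\ref{lem:tozero} drives $r\to\infty$; a naive interchange of limits is not justified. I would instead retain the explicit finite-$k$ lower bound on the denominator obtained in the proof of Lemma~\ref{lem:helpit}, namely $k\mu\bigl(1-(C\rho/(\sigma^3\sqrt{k}))^r\bigr)$, yielding
\[
\frac{\EE\left[\lbbox(S,T)\right]}{\EE\left[\lbhausdorff(S,T)\right]}\leq\frac{\mu_r}{\mu\left(1-\left(C\rho/(\sigma^3\sqrt{k})\right)^r\right)}.
\]
Once $k$ exceeds the threshold $(C\rho/\sigma^3)^2$ the base drops below $1$, so raising it to the growing exponent $r$ only pushes the bracket toward $1$ (and, using $r+1\leq 2r+1$ together with the monotonicity of $\mu_r$ in the number of terms, the same bound controls every index). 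Thus the bracket tends to $1$ while $\mu_r\to 0$, so both effects drive the right-hand side to $0$ in the \emph{same} direction as $k,n\to\infty$ jointly, and the theorem follows.
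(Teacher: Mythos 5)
Your proposal is correct and follows essentially the same route as the paper: sandwich the ratio via $\dtw\geq\lbhausdorff$ and the per-coordinate bound $\lbbox(S,T)\leq\sum_{i}\sum_{j}\min_{\ell}D_{\ell,j}$, cancel the common factor $n$, and invoke Lemma~\ref{lem:helpit} followed by Lemma~\ref{lem:tozero} with $r=R\cdot n$. If anything, your final step --- retaining the explicit finite-$k$ denominator bound $k\mu\left(1-\left(C\rho/(\sigma^3\sqrt{k})\right)^r\right)$ so that the limits in $k$ and $n$ can be taken jointly, and noting that boundary indices still have at least $r+1$ band points --- is \emph{more} careful than the paper, which simply takes the two limits in succession and ignores the band truncation at the sequence ends.
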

\begin{proof}
    Since we do not know anything about the time series, we assume that their elements are independent identically distributed variables.
    For simplicity, we even assume that the distance between any two elements is a random variable, i.\,e.
    \begin{align*}
        D_{i,j} &\coloneqq \dist\left( s_i, t_j \right)^2
    \end{align*}
    are independent identically distributed random variables.
    With a Sakoe Chiba band of width $r=R\cdot n$ ($0<R\leq 1$ constant) we get
    \begin{align*}
        \frac{\EE\left[\lbbox(S,T)\right]}{\EE\left[\dtw(S,T)\right]} &\leq \frac{\EE\left[ \sum_{i=1}^n \sum_{j=1}^k \dist\left( s_i, \left[ l_{i,j},r_{i,j} \right] \right)^2 \right]}{\EE\left[\lbhausdorff(S,T)\right]} \\
        &\hspace{-2cm}\leq \frac{\EE\left[ \sum_{i=1}^n \sum_{j=1}^k \min_{-r\leq\ell\leq r} D_{i+\ell,j} \right]}{\EE\left[ \sum_{i=1}^n\min_{-r\leq\ell\leq r} \sum_{j=1}^k D_{i+\ell,j} \right]} \\
        &\hspace{-2cm}= \frac{n\cdot\EE\left[ \sum_{j=1}^k \min_{1\leq\ell\leq 2r+1} D_{\ell,j} \right]}{n\cdot\EE\left[ \min_{1\leq\ell\leq 2r+1} \sum_{j=1}^k D_{\ell,j} \right]} \\
        &\hspace{-2cm}\longrightarrow \frac{\mu_r}{\mu} \quad \text{ for } k\rightarrow \infty \\
        &\hspace{-2cm}\longrightarrow 0 \quad \text{ for } n\rightarrow \infty
    \end{align*}
\end{proof}

\section{Dog-Keeper Distance}
\label{sec:dogkeeper}

Practical implementations of \dtw{} speed up the computation process by stopping early when a \emph{promised lower bound} already exceeds a certain threshold.
For example, processing the warping matrix while computing \dtw{}, the minimum value of a column or row is a lower bound to the final distance value.
The lower bound proposed by Keogh promises even tighter values to the final distance value~\cite{Trillion}.

Since \dtw{} sums up values along a warping path through the distance matrix, the values in the early computation time exceed a certain threshold less probable than the values at a later computation time.
This observation does not hold when computing the \frechet{} distance~\cite{WDK17}.
This insight yields the idea that the matrix filled out during computation of the \frechet{} distance might be very sparse.

Therefore, our approach to improving the computation time of the \frechet{} distance is to compute the distance matrix using a \textit{sparse matrix} algorithm (cf. Section~\ref{sec:sparsedk}).
However, to avoid computing most of the cells of this matrix, we need a low threshold beforehand.
Such a threshold is found using a cheap upper bound to the \frechet{} distance.
Specifically, we propose a greedy algorithm to the time warping alignment problem (cf. \ref{sec:greedydk}).

To explain the algorithms in detail, consider two time series $S=(s_1,\cdots,s_m)$ and $T=(t_1,\cdots,t_n)$.

\subsection{Greedy Dog-Keeper}
\label{sec:greedydk}

The greedy dog-keeper distance (\greedydk) starts by aligning $s_1$ and $t_1$.
It then successively steps to one of the next pairs aligning $(s_{i+1}, t_j)$, $(s_i, t_{j+1})$, or $(s_{i+1}, t_{j+1})$ with the lowest distance.
When it reaches the alignment of $s_m$ to $t_n$, the maximum of the distances along the choosen path yields the final distance.
Algorithm~\ref{alg:greedydk} provides the pseudo code for the algorithm.

\begin{algorithm}
    \caption{Greedy dog-keeper distance}
    \begin{lstlisting}
Algorithm: greedydogkeeper
Input: time series $S,T$ of length $m,n$ resp.; threshold $\epsilon$
Output: upper bound $d$

let $(i,j)=(1,1)$
let $g=\dist\left( s_i,t_j \right)$
while $i\neq m$ and $j\neq n$
 // non defined distances yield $\infty$
 $(i,j)=\arg\min\begin{cases}
     \dist(s_{i+1},t_j) \\
     \dist(s_{i+1},t_{j+1}) \\
     \dist(s_i,t_{j+1})
 \end{cases}$
 $g=\max\left\{ g,\dist(i,j)\right\}$
 if $g>\epsilon$ then return $g$
return $g$
    \end{lstlisting}
    \label{alg:greedydk}
\end{algorithm}

So far, the \greedydk{} distance matches whole sequences against each other.
In order to support sub sequence matching, we alter the algorithm by first finding the best match of $s_1$ to any $t_i$ (name it $t_{left}$).
Then, we run \greedydk{} starting at $t_{left}$ and stop the computation as soon, as $s_m$ is aligned to one of the $t_i$.
For details, refer to Algorithm~\ref{alg:subgreedydk}.

\begin{algorithm}
    \caption{Greedy dog-keeper for Sub Sequence Search}
    \begin{lstlisting}
Algorithm: greedydogkeeper
Input: time series $S$ and $T$ of length $m$ and $n$ resp., threshold $\epsilon$
Output: upper bound $d$

let $(i,j)=(1,\arg\min\left\{ \dist(s_1,t_j) \right\})$
let $g=\dist\left( s_i,t_j \right)$
while $i\neq m$
 // non defined distances yield $\infty$
 $(i,j)=\arg\min\begin{cases}
     \dist(s_{i+1},t_j)\\
     \dist(s_{i+1},t_{j+1})\\
     \dist(s_i,t_{j+1})
 \end{cases}$
 $g=\max\left\{ g,\dist(i,j)\right\}$
 if $g>\epsilon$ then return $g$
return $g$
    \end{lstlisting}
    \label{alg:subgreedydk}
\end{algorithm}

Both algorithms have linear complexity, since the while loop runs at maximum $m+n$ times

\subsection{Sparse dog-keeper distance}
\label{sec:sparsedk}

The sparse dog-keeper algorithm essentially works the same as the original algorithm, except that it only visits those neighbor cells of the distance matrix having a value not larger than a given threshold.
Algorithm~\ref{alg:sparsedk} provides the pseudo code for the algorithm:
Similar to the original algorithm, we compute the (sparse) warping matrix column by column (cf.~Line~10).
The variables $I$ and $J$ store the indices of cells to visit in the current and next column, respectively.
If the value at the current cell of the matrix is not larger than the threshold (cf.~Line~13), we also need to visit the right, upper right, and upper cells (cf.~Lines~15~and~16).
The actual values within the columns are stored in $D$ (current column) and $E$ (previous column).
After we finished the computation of a column, we prepare the variables to enter the next column (cf.~Lines~19~to~26).

\begin{algorithm}
    \caption{Sparse dog-keeper distance}
    \begin{lstlisting}
Algorithm: sparsedogkeeper
Input: time series $S$ and $T$ of length $m$ and $n$ resp., threshold $\epsilon$, boolean SUB
Output: distance $d$

$I=\left\{ 1 \right\}$
$J=\emptyset$
$D=(\infty,\cdots)$
$E=(\infty,\cdots)$

for $k=1$ to $n$
  if SUB $d=\infty$
  for $i\in I$
    if $\dist\left( S_i, T_k \right) \leq \epsilon$
    $D_i=\max\begin{cases}
        \dist\left( S_i,T_k \right)\\
        \min\left\{ D_{i-1}, E_i, E_{i-1} \right\}
    \end{cases}$
      $I=I\cup\left\{ i+1 \right\}$
      $J=J\cup\left\{ i, i+1 \right\}$
    else
      $I=I\setminus\left\{ i \right\}$
  if SUB $d=\min\left\{ d, D_m \right\}$
  $E=D$
  // reset $D$ to $(\infty,\cdots)$
  for $i\in I$
    $D_i=\infty$
  if SUB $I=J\cup\left\{ 1 \right\}$
  else $I=J$
  $J=\emptyset$
  if not SUB and $I$ is empty
    return $\epsilon$

if SUB return $d$
else return $D_m$
    \end{lstlisting}
    \label{alg:sparsedk}
\end{algorithm}

The algorithm performs subsequence search iff passing \texttt{true} for the parameter \texttt{SUB}.
It differs to the whole matching version by considering each position of the super sequence as possible start of a match (cf.~Line~24) and by considering each position as possible end of a match (cf.~Line~19).

\subsection{Nearest Neighbour Search}
\label{sec:sparsedkNN}

Our nearest neighbor search algorithm requires finding a good upper bound.
Hence, we first loop over all time series and find the lowest upper bound using the \greedydk{} distance.
Then, we perform a common nearest neighbor search by scanning the time series.

\section{Evaluation}
\label{sec:evaluation}

In Section~\ref{sec:evaluateCoD}, we evaluate Theorem~\ref{thm:curseofdimensionality} (which claims the existence of the curse of dimensionality for \lbbox) experimentally on synthetic data sets which have a parameter for setting the dimensionality \cite{TSGenerator}.
We confirm our observations on real world data sets.

We compare \dtw{} (with \lbbox{} as lower bound) against our implementation of the \frechet{} distance in terms of runtime in Section~\ref{sec:evaluateTime}.
We focus on the subsequence matching algorithms since this is the more challenging task.

The retrieval quality is of high importance when considering an alternative distance function for nearest neighbor queries.
Therefore, we close the evaluation by comparing the accuracy of the \frechet{} and \dtw{} distance functions on nearest neighbor tasks in Section~\ref{sec:evaluateAcc}.

\paragraph*{Synthetic Data Sets}

Due to space limitations, we can not show results for all parameters of the data set generators.
The following set of parameters yield the most interesting results:
If not mentioned otherwise explicitely, we generate data with distortion $25$, radius $50$, $50$ distinguish classes, and $2$ representatives per class using the \ram{} generator and $27$ distinguish classes and $3$ representatives per class using the \cbf{} generator.

\subsection{Curse Of Dimensionality}
\label{sec:evaluateCoD}

Theorem~\ref{thm:curseofdimensionality} claims that the tightness of the lower bound function $\lbbox$ to $\dtw$ gets worse with increasing dimensionality.
We evaluate the theorem experimentally on data sets generated by the two synthesizers \cbf{} and \ram{} \cite{TSGenerator}.
Figure~\ref{fig:examples_ram} illustrates examples for time series generated by the \ram{} and \cbf{} synthesizers.

\begin{figure}
    \centering
    \includegraphics[width=.49\linewidth]{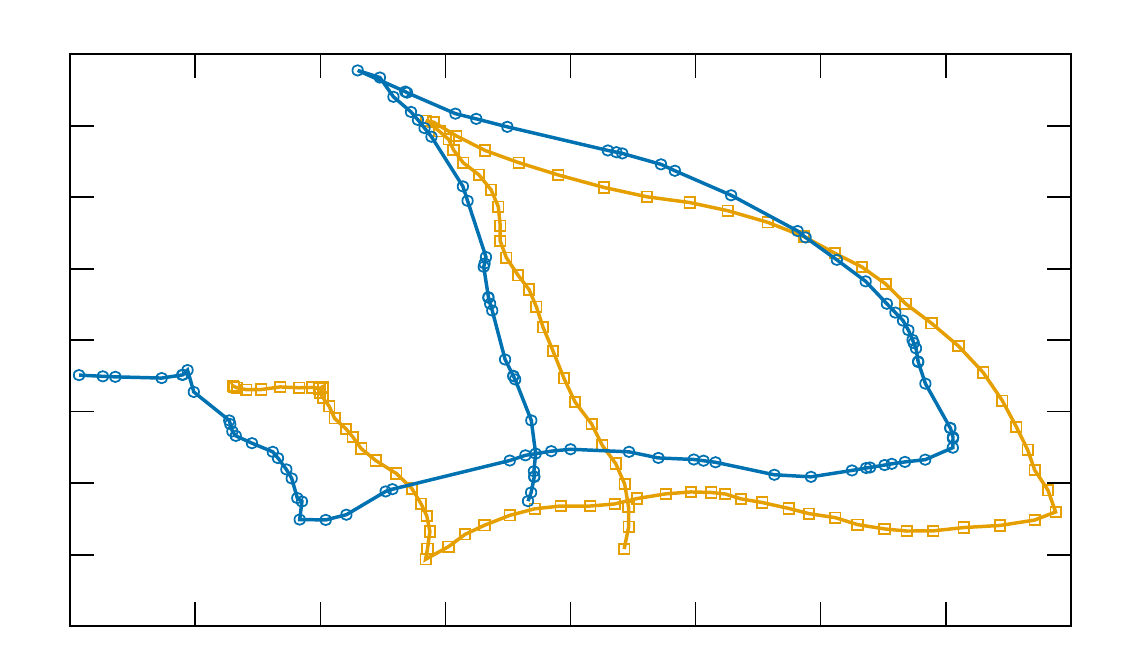}
    \includegraphics[width=.49\linewidth]{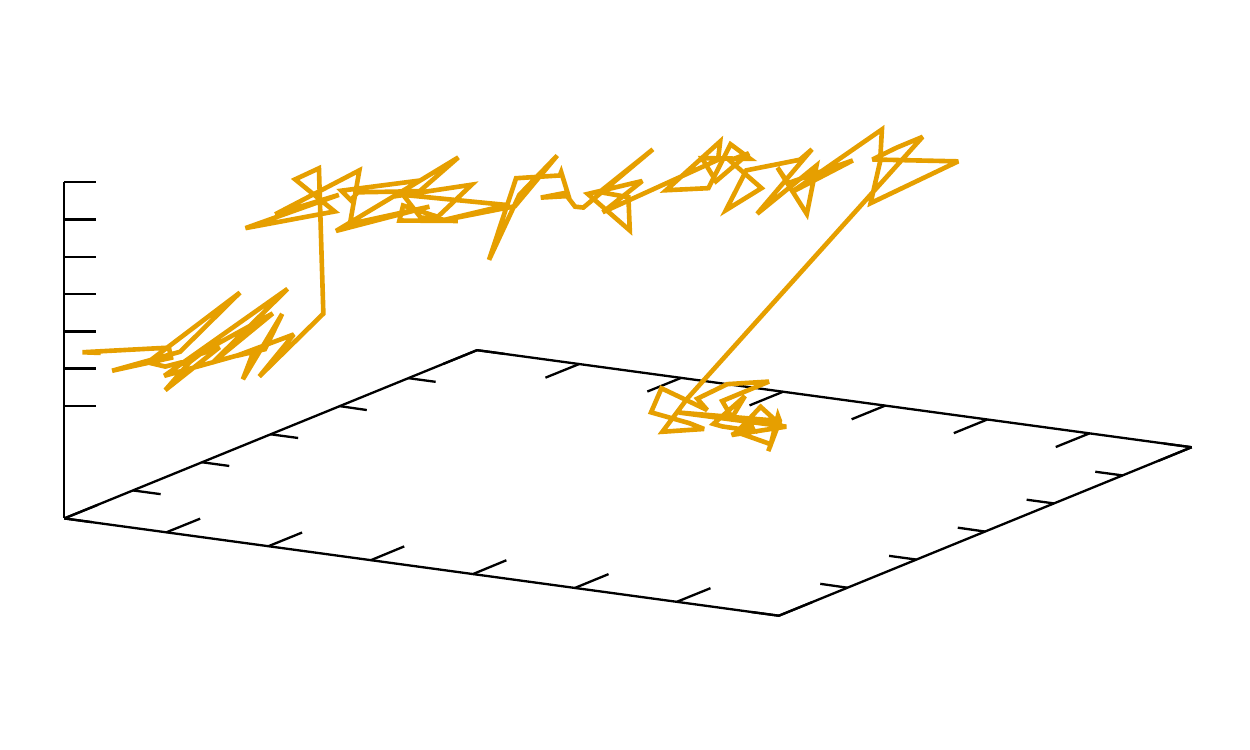}
    \caption{Two 2-dimensional examples for the \ram{} data set (left) and one example for the \cbf{} data set (right).}
    \label{fig:examples_ram}
\end{figure}

\paragraph*{Implementation}
Our implementation of \lbbox{} is based on \lbkeogh{} from the UCR Suite \cite{Trillion}.
We skipped the normalization on some data sets to improve the runtime and accuracy of \lbbox{}.
While adapting the UCR Suite to work on multi-dimensional time series, we checked that the runtime remained stable for 1-dimensional time series.
Thus, we made sure that all of our results for runtime comparison are not implementation dependent.
We ran all experiments on one core of an Intel(R) Xeon(R) CPU E5-2620 0 @ 2.00GHz with 24GB Memory.

\paragraph*{Tightness}
Figure~\ref{fig:tightness_ram_len_dim} demonstrates that the tightness of \lbbox{} to \dtw{} is decreasing down to zero for increasing length, which confirms Theorem~\ref{thm:curseofdimensionality}.
The theorem also claims that the tightness drops to a constant value for increasing dimensionality but constant length.
With Figure~\ref{fig:tightness_ram_len_dim} we confirm this claim for the \ram{} data sets and show even more that the tightness converges (drops down) already for moderate dimensionality (e.\,g. 3 dimensional time series).
We could ovbserve the same results for the \cbf{} data sets (cf. Figure~\ref{fig:tightness_cbf_len_dim}) although the dimensionality has greater impact on the tightness while the length has a smaller impact.

\paragraph*{Pruning power}
We show the effects on the pruning power in the second heat maps of both Figures~\ref{fig:tightness_ram_len_dim}~and~\ref{fig:tightness_cbf_len_dim}.
Their black lines (lines of same values within each heat map) look rather similar to the black lines in their corresponding heat maps on the left side.
They confirm the correlation between the dropping of the tightness of \lbbox{} and the dropping of the pruning power within a nearest neighbor search.
Still, the dimensionality has a larger impact to the pruning power than to the tightness of the time series.

\begin{figure}
    \centering
    \includegraphics[width=.49\linewidth]{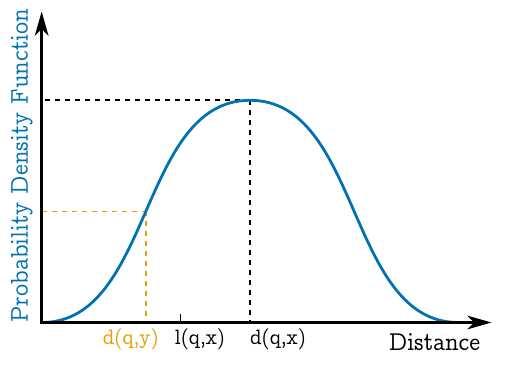}
    \includegraphics[width=.49\linewidth]{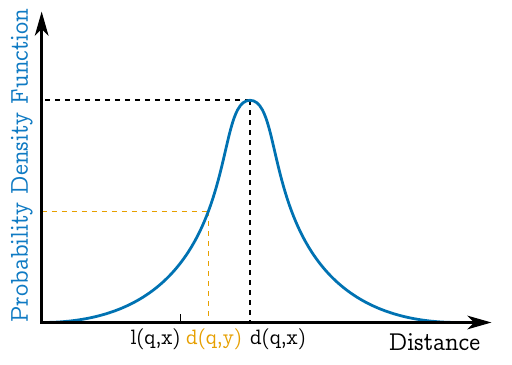}
    \caption{Two data sets (left: low dimensional; right: high dimensional), a query $q$, a nearest neighbor $y$ and a candidate $x$.
        If the lower bound $l(q,x)\geq d(q,y)$, then $x$ can be pruned without computation of $d(q,x)$.
    }
    \label{fig:curse_const_tightness}
\end{figure}

The reason is the curse of dimensionality on retrieval tasks \cite{SearchingInMetricSpaces}.
It basically says that the variance of distance values between random elements of a data set decreases with increasing dimensionality.
Thus, even if the tightness of a lower bound such as \lbbox{} equals on two distinct data sets, pruning is less probable in the higher dimensional data set.
Figure~\ref{fig:curse_const_tightness} illustrates an example where $q$ is the query and $y$ the nearest neighbor found so far during the search.
If the lower bound $l(q,x)$ is larger than $d(q,y)$ then we can prune the next element $x$.
Assume $l(q,x) = \alpha\cdot d(q,x)$ where $0\leq\alpha\leq 1$ is equal in the low dimensional and the high dimensional data set.
Since $d(q,x)$ and $d(q,y)$ converge on data sets with increasing dimensionality (this is the curse of dimensionality), $l(q,x)<d(q,y)\approx d(q,x)$ is more probable in higher dimensionality.
The lower bound can not be used for pruning in these cases.
Please note, that this insight holds for any lower bound driven approach.

Both of the Figures~\ref{fig:tightness_ram_len_dim} and \ref{fig:tightness_cbf_len_dim} show that \lbbox{} loses its benefit already for moderate dimensionality.

\begin{figure}
    \centering
    \includegraphics[width=.49\linewidth]{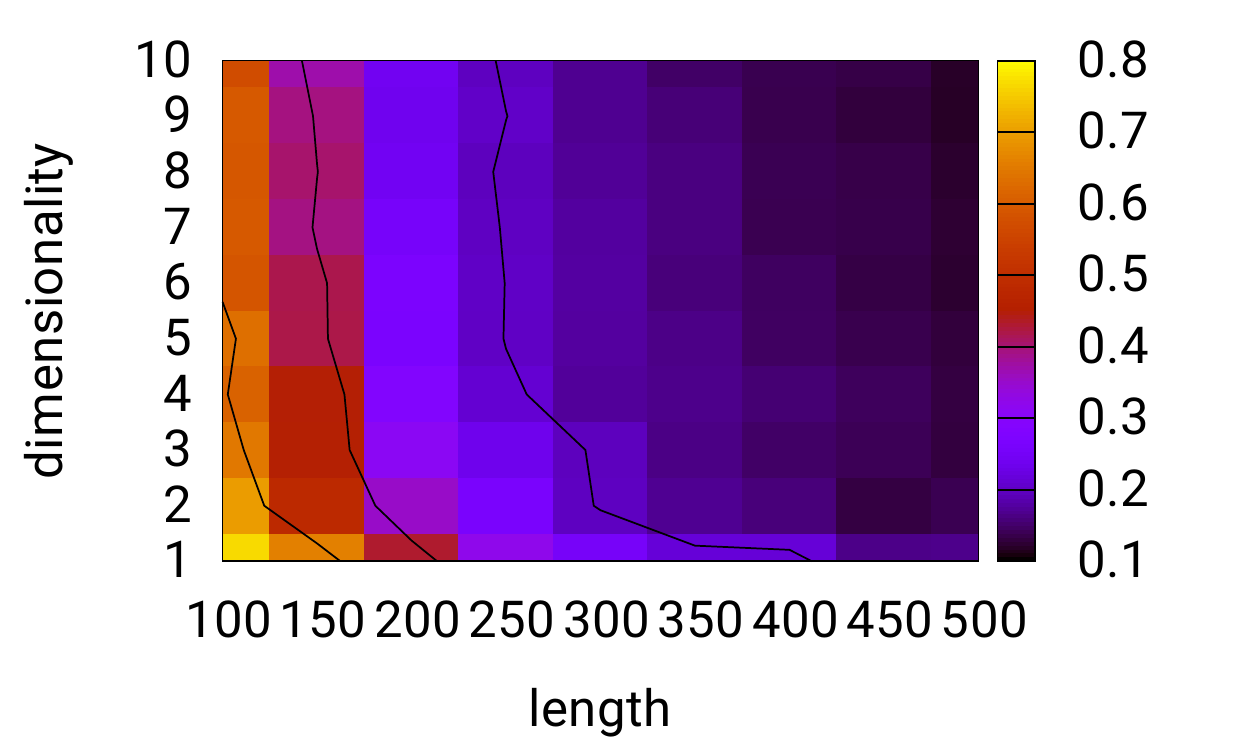}
    \includegraphics[width=.49\linewidth]{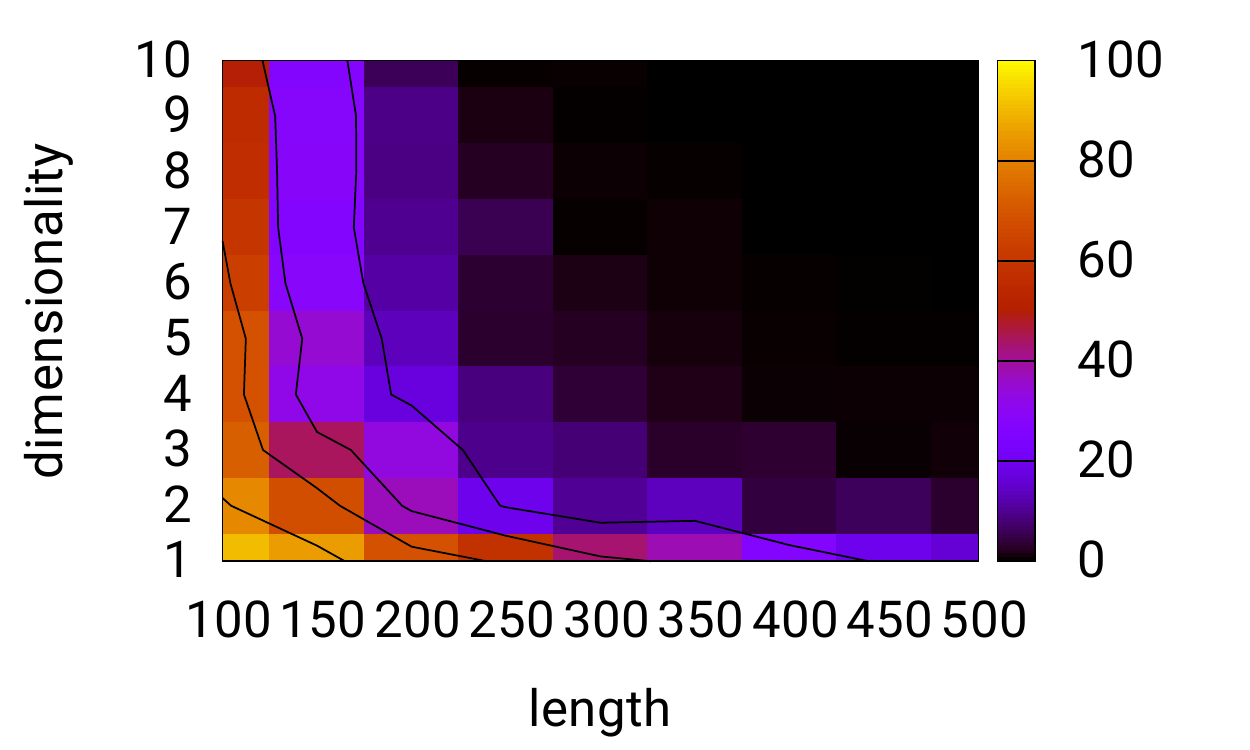}
    \caption{Heat maps presenting the average tightness of \lbbox{} (left) and the pruning power of \lbbox{} in percent (right)
    on the \ram{} data set.}
    \label{fig:tightness_ram_len_dim}
\end{figure}

\begin{figure}
    \centering
    \includegraphics[width=.49\linewidth]{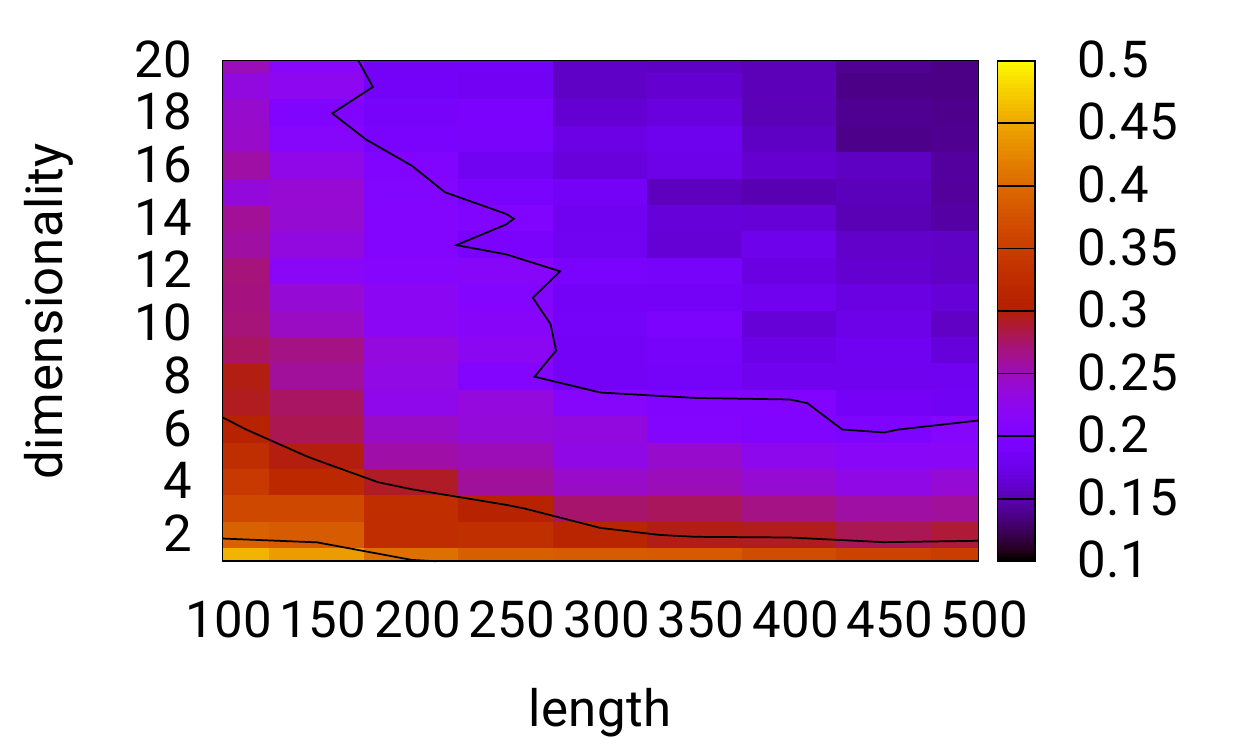}
    \includegraphics[width=.49\linewidth]{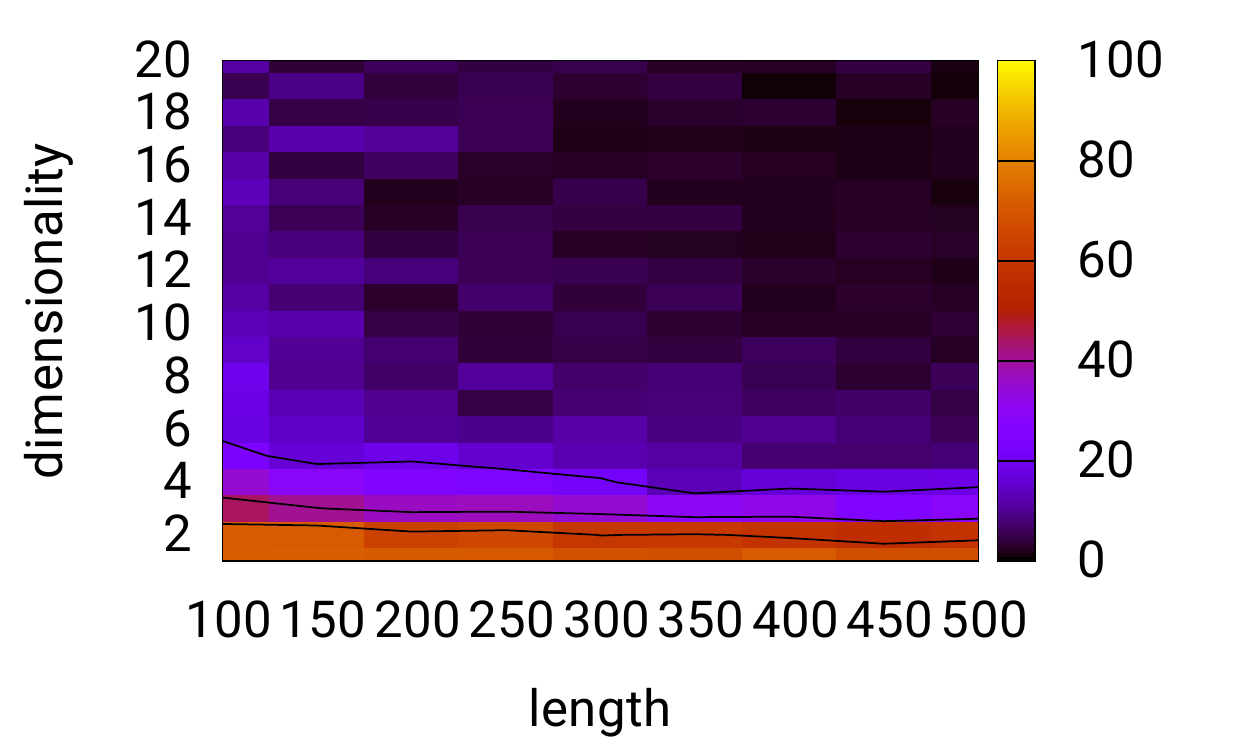}
    \caption{Heat maps presenting the average tightness of \lbbox{} (left) and the pruning power of \lbbox{} in percent (right)
    on the \cbf{} data set.
    }
    \label{fig:tightness_cbf_len_dim}
\end{figure}

\subsection{Computation Time}
\label{sec:evaluateTime}

\begin{figure}
    \centering
    \includegraphics[width=.49\linewidth]{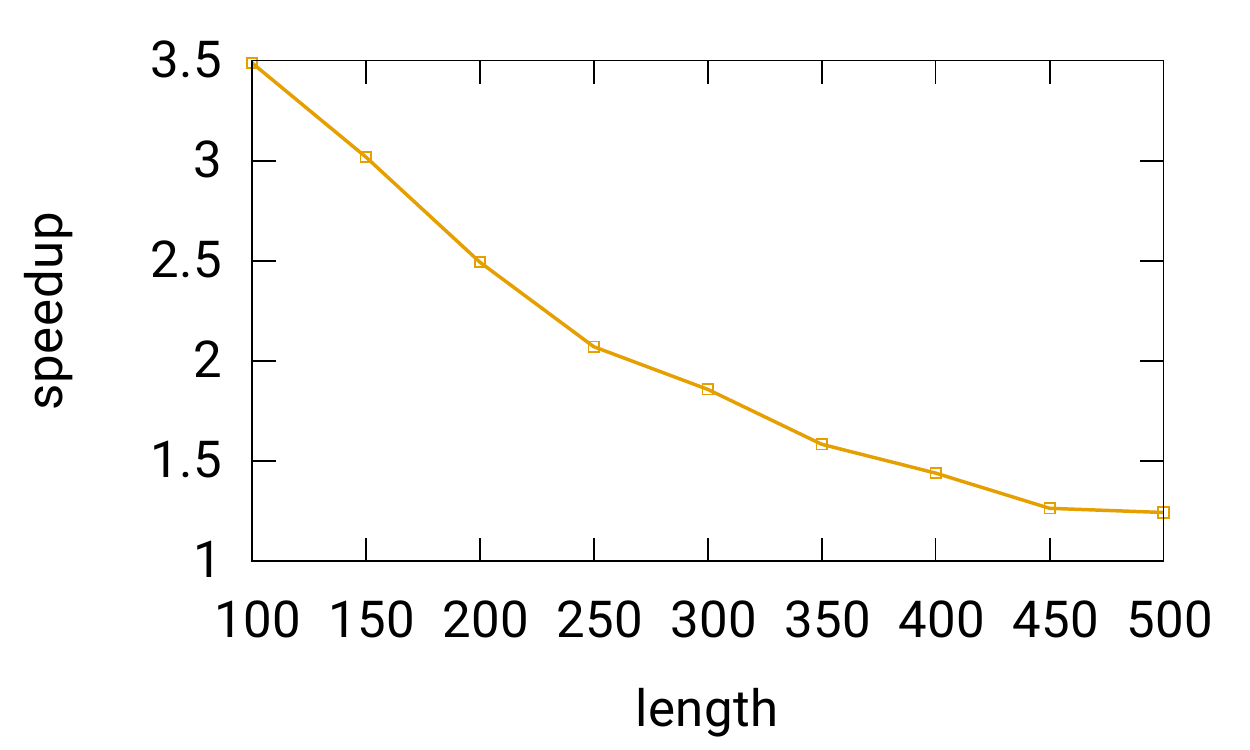}
    \includegraphics[width=.49\linewidth]{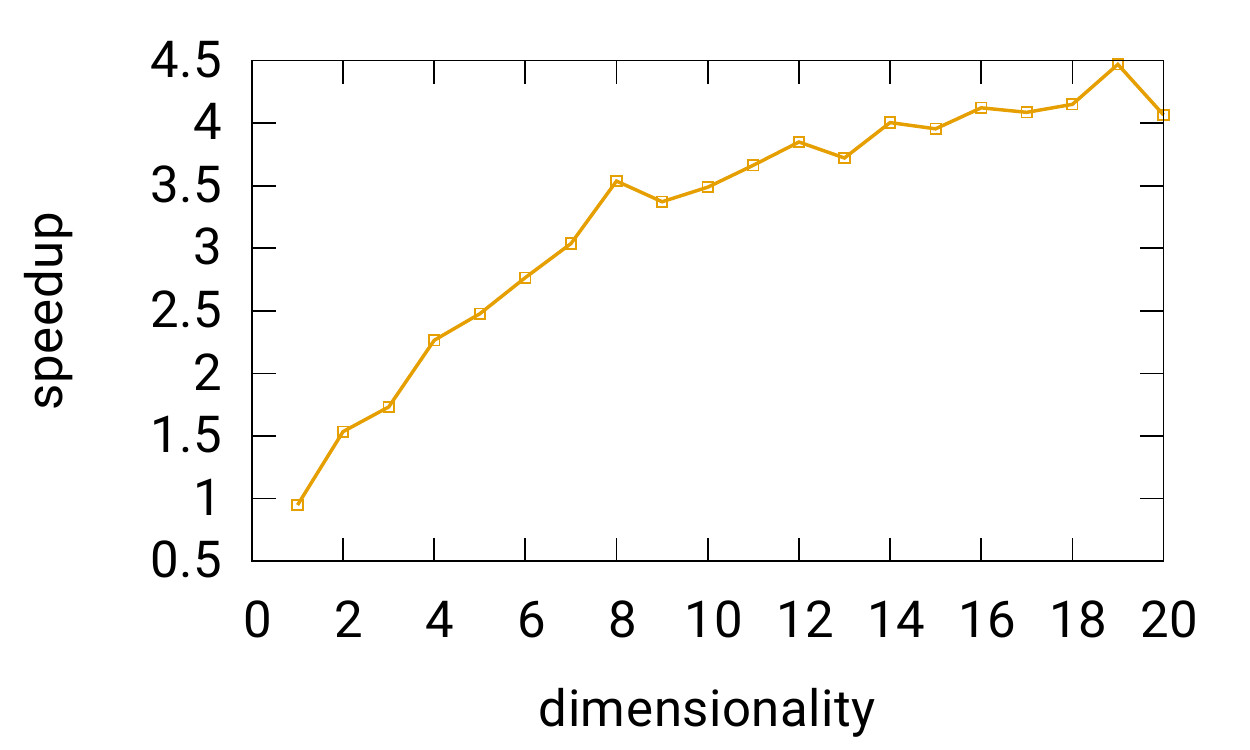}
    \caption{Speedup of \frechet{} to \dtw{} with \lbbox{} on the \cbf{} data set. Dimensionality: 10 (left); Length: 100 (right).}
    \label{fig:speedup_cbf_len_dim}
\end{figure}

\begin{figure}
    \centering
    \includegraphics[width=.49\linewidth]{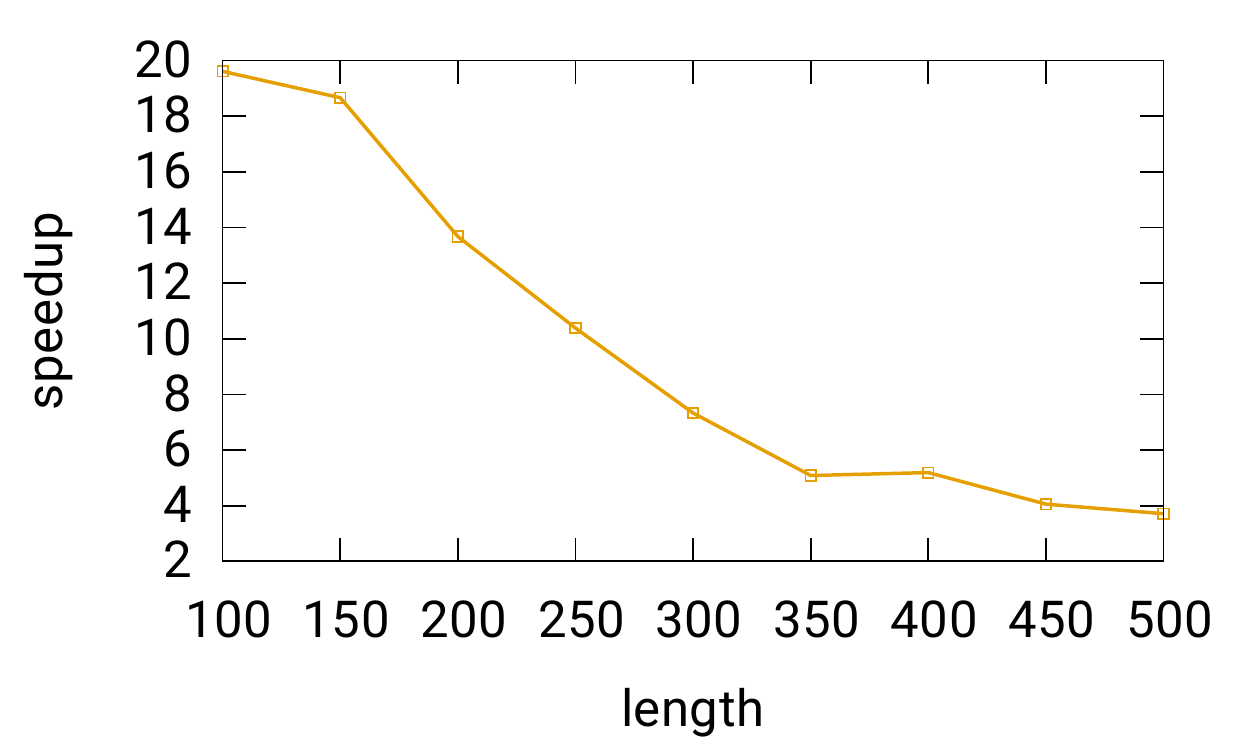}
    \includegraphics[width=.49\linewidth]{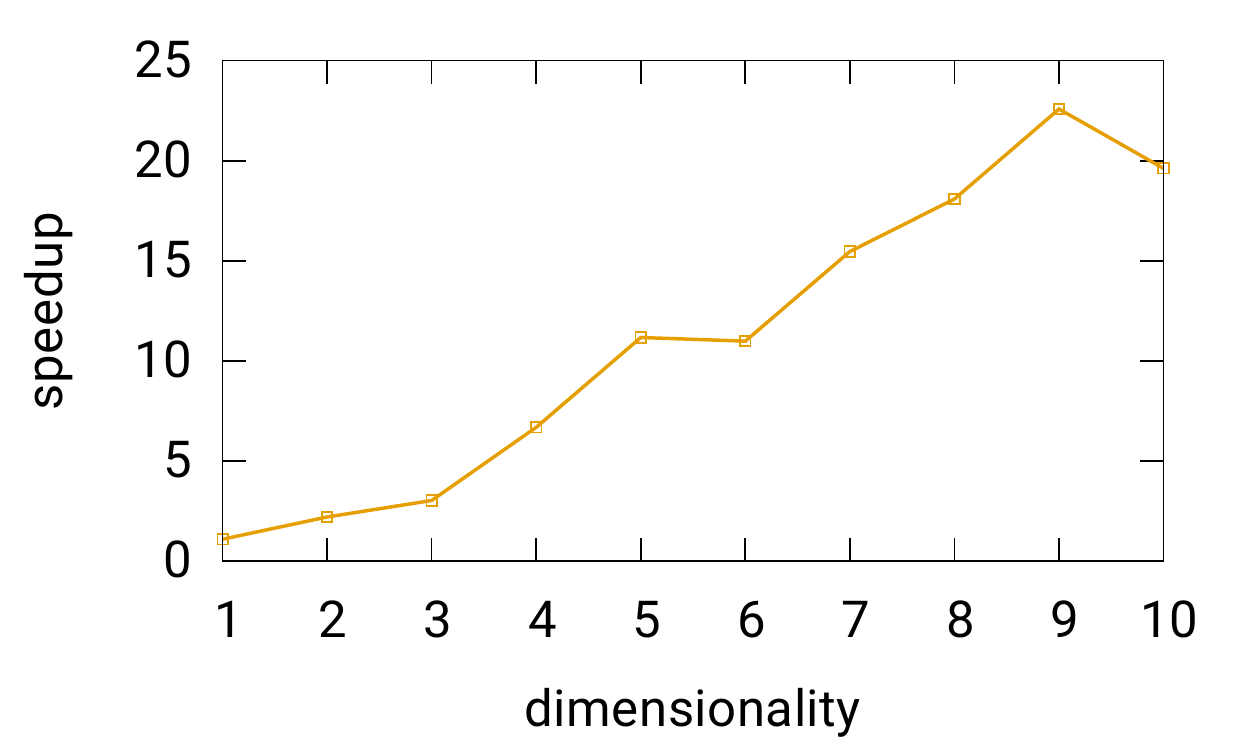}
    \caption{Speedup of \frechet{} to \dtw{} with \lbbox{} on the \ram{} data set. Dimensionality: 10 (left); Length: 100 (right).}
    \label{fig:speedup_ram_len_dim}
\end{figure}

\begin{figure}
    \centering
    \includegraphics[width=.49\linewidth]{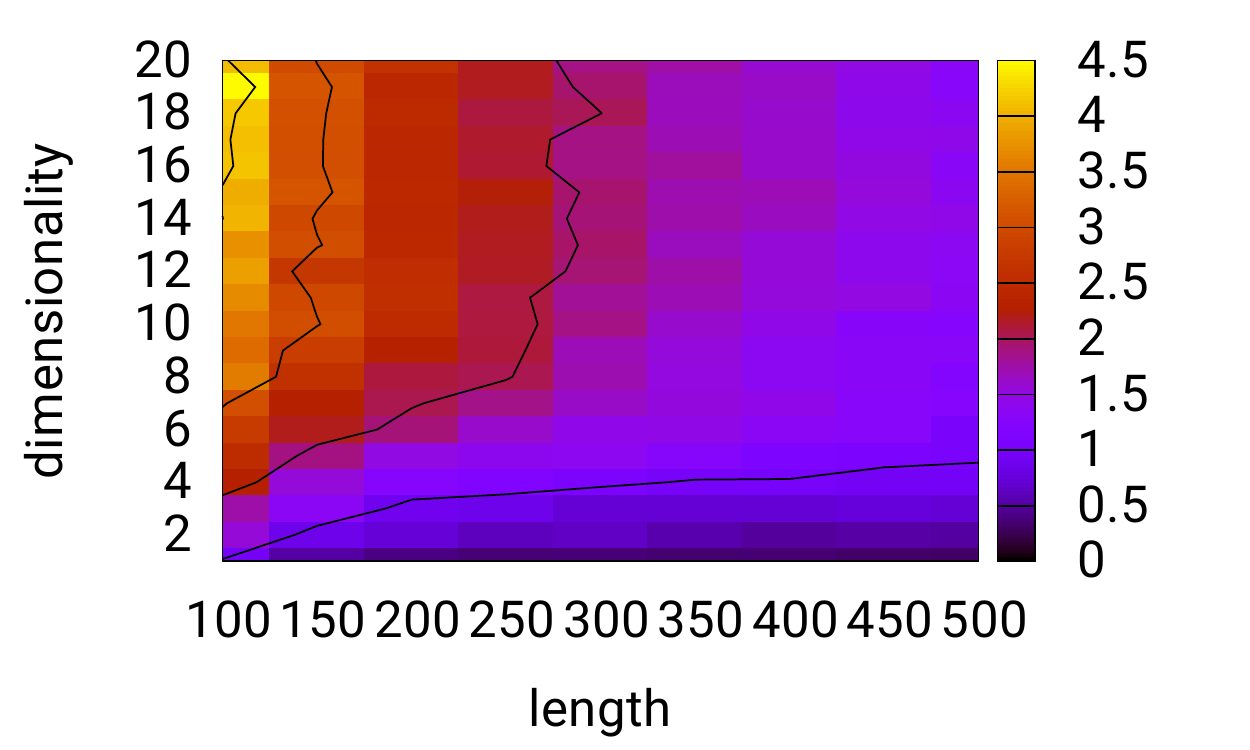}
    \includegraphics[width=.49\linewidth]{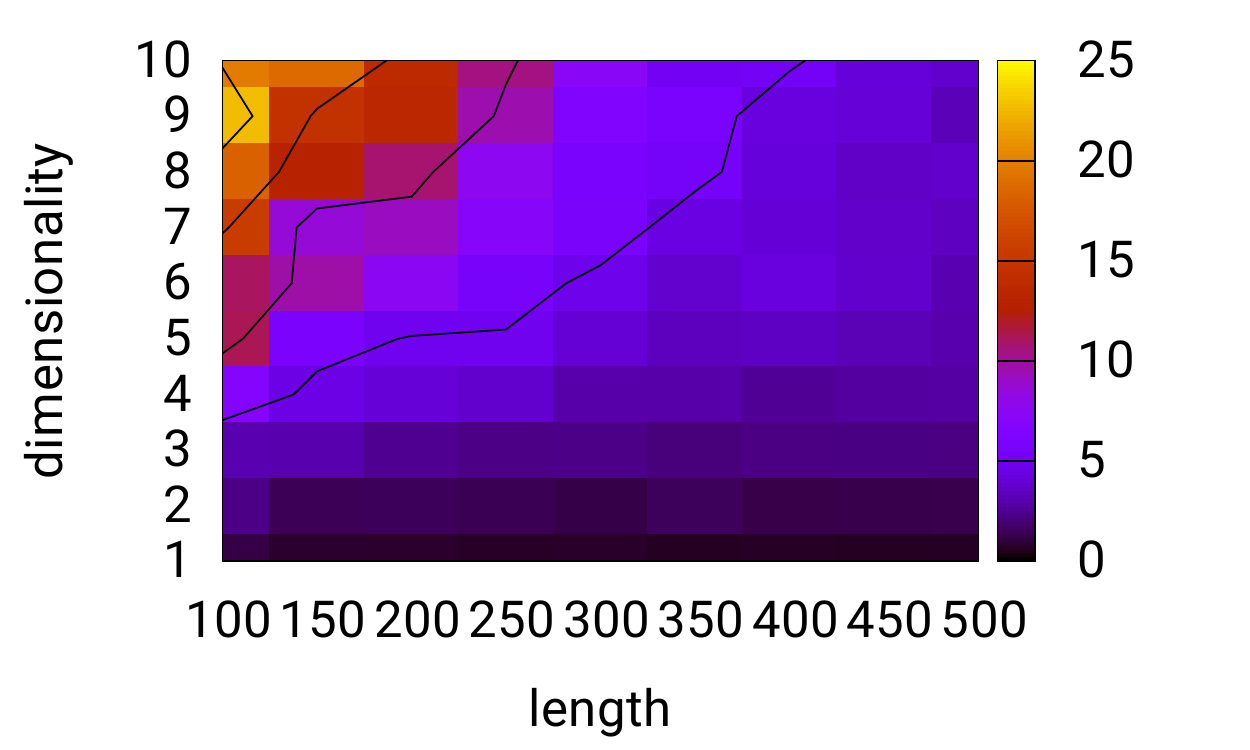}
    \caption{Speedup of \frechet{} to \dtw{} with \lbbox{} on the \cbf{} data set (left) and the \ram{} data set (right). Dimensionality: 10 (left); Length: 100 (right).}
    \label{fig:speedup_len_dim}
\end{figure}

For the evaluation of the runtime of our \frechet{} algorithm, we consider the runtime of \dtw{} with \lbbox{} as the base line.
In Figures~\ref{fig:speedup_cbf_len_dim} and \ref{fig:speedup_ram_len_dim} we show the speedup of our \frechet{} implementation against \dtw{} with \lbbox{} on the \ram{} and \cbf{} data sets, respectively.
The speedup decreases for longer time series while increasing with growing dimensionality.
Figure~\ref{fig:speedup_len_dim} confirms this observation on an even larger parameter set.
Thus, our \frechet{} implementation outperforms \dtw{} with \lbbox{} on rather short and multi-dimensional time series.

To ensure that our results do not depend on the generated synthetical data sets, we repeated the same experiments on the following real world data sets from the UCI Machine Learning Repository \cite{UCIDatasets}:
Character Trajectories (CT), Activity Recognition system based on Multisensor data fusion (AReM) \cite{AReM}, EMG Physical Action (EMGPA), Australian Sign Language 2 (ASL) \cite{ASL}, Arabic Spoken Digits (ASD), and Vicon Physical Action (VICON).
We also used the ECG data proposed by Keogh \cite{Trillion} for querying in one very long time series.
Columns~1~and~2 of Table~\ref{tbl:realdatasets} show that the speedup increases with growing dimensionality.
Thus, we demonstrate that our implementation of the \frechet{} distance outperforms \dtw{} with \lbbox{} in terms of computation time, especially on multi-dimensional time series.

\begin{table}
    \caption{Speedup and Accuracy of \dtw{} with \lbbox{} and \frechet{} on real world data sets.}
    \begin{center}
    \begin{tabular}{c|cccc}
        data set & dim. & speedup & acc. \dtw & acc. \frechet \\
        \hline
        ECG & 1 & $0.05$ & - & - \\
        CT & 2 & $4.5$ & $0.94$ & $ 0.93$ \\
        AReM \footnote{No Z-Normalization} & 6 & $1.2$ & $0.81$ & $0.75$ \\
        EMGPA & 8 & $31$ & $0.21$ & $0.26$ \\
        ASD & 13 & $45$ & $0.98$ & $0.96$ \\
        ASL & 22 & $33$ & $0.85$ & $0.88$ \\
        VICON & 27 & $62$ & $0.12$ & $0.09$
    \end{tabular}
    \end{center}
    \label{tbl:realdatasets}
\end{table}

\subsection{Retrieval Quality:}
\label{sec:evaluateAcc}

Figures~\ref{fig:accuracy_cbf_len_dim} and \ref{fig:accuracy_ram_len_dim} reveal that the accuracy of both distance functions decreases on \cbf{} while increasing on \ram{} with growing dimensionality.
While \frechet{} outperforms \dtw{} on the \cbf{} data set it looses on the \ram{} data set.

\begin{figure}
    \centering
    \includegraphics[width=.49\linewidth]{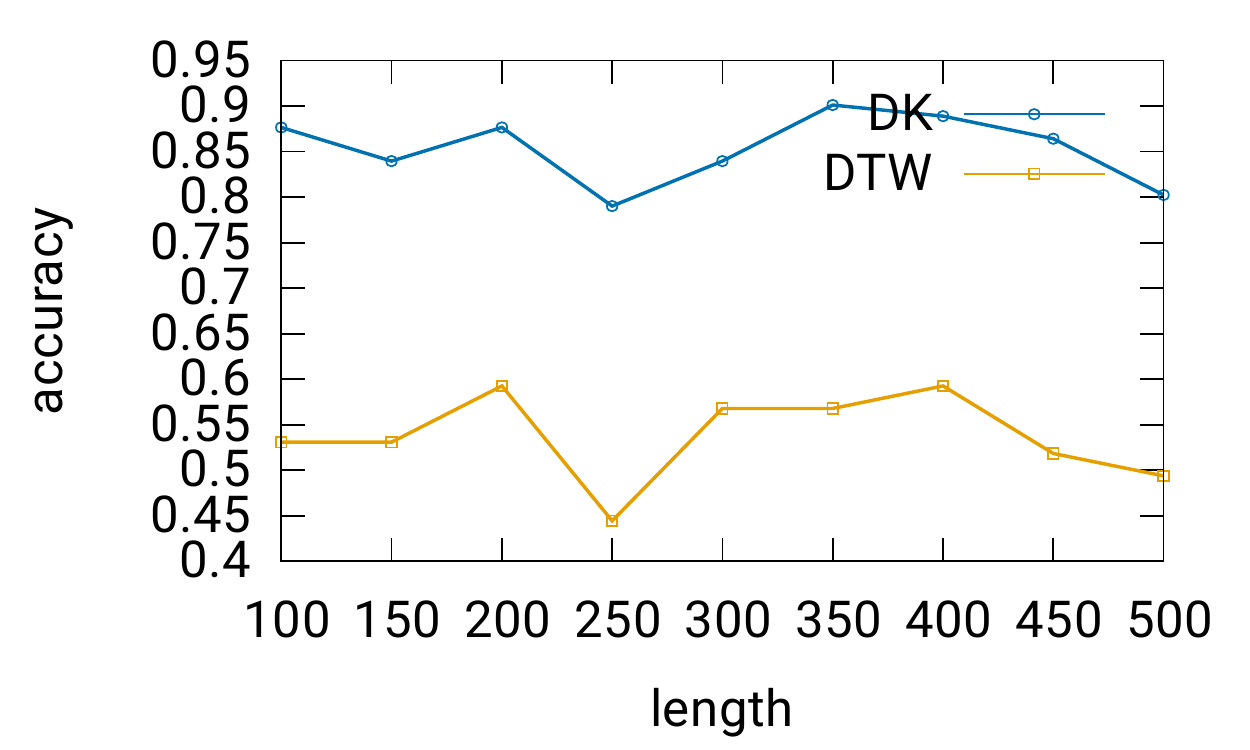}
    \includegraphics[width=.49\linewidth]{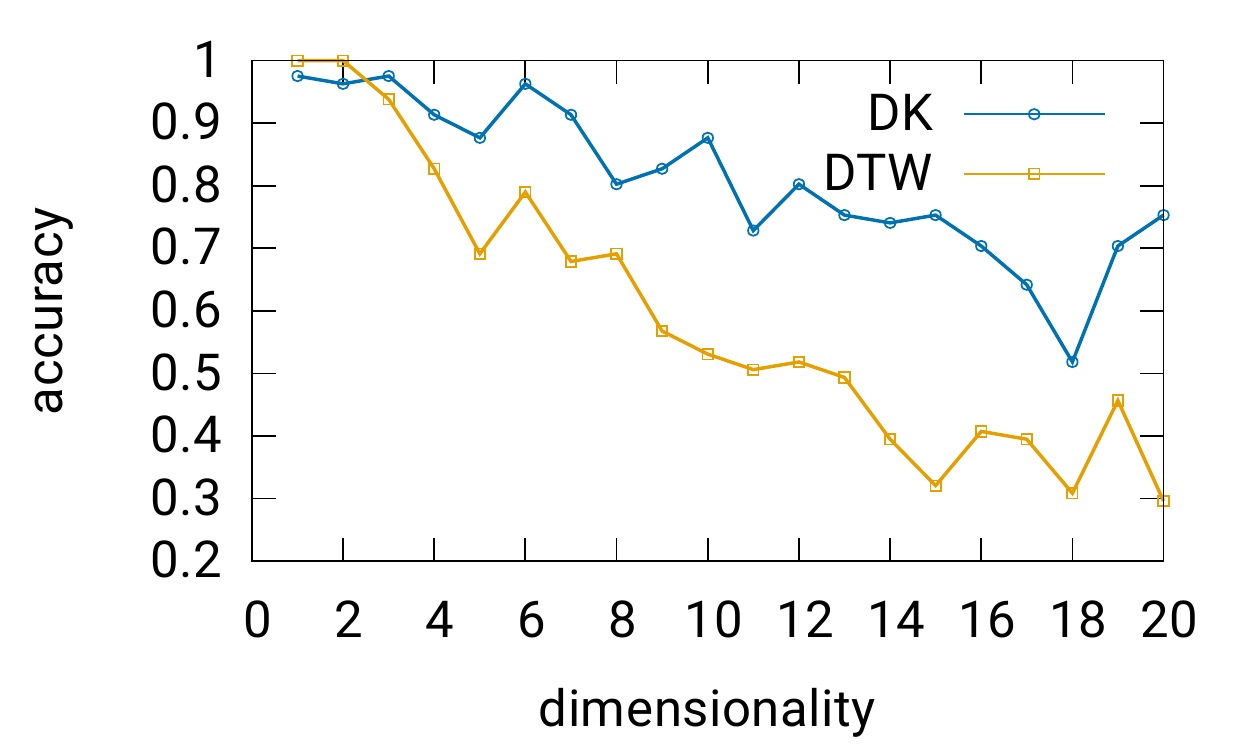}
    \caption{
        Accuracy of \frechet{} and \dtw{} with \lbbox{} on the \cbf{} data set. Dimensionality: 10 (left); Length: 100 (right).
    }
    \label{fig:accuracy_cbf_len_dim}
\end{figure}

\begin{figure}
    \centering
    \includegraphics[width=.49\linewidth]{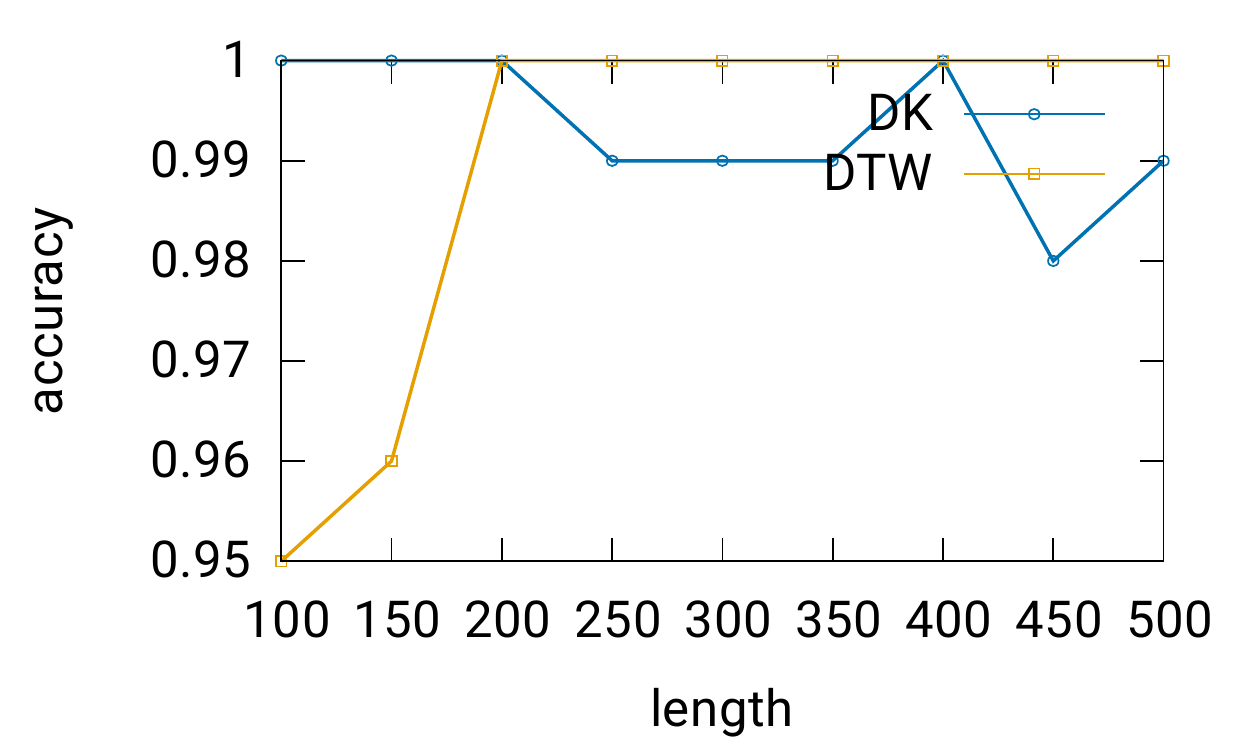}
    \includegraphics[width=.49\linewidth]{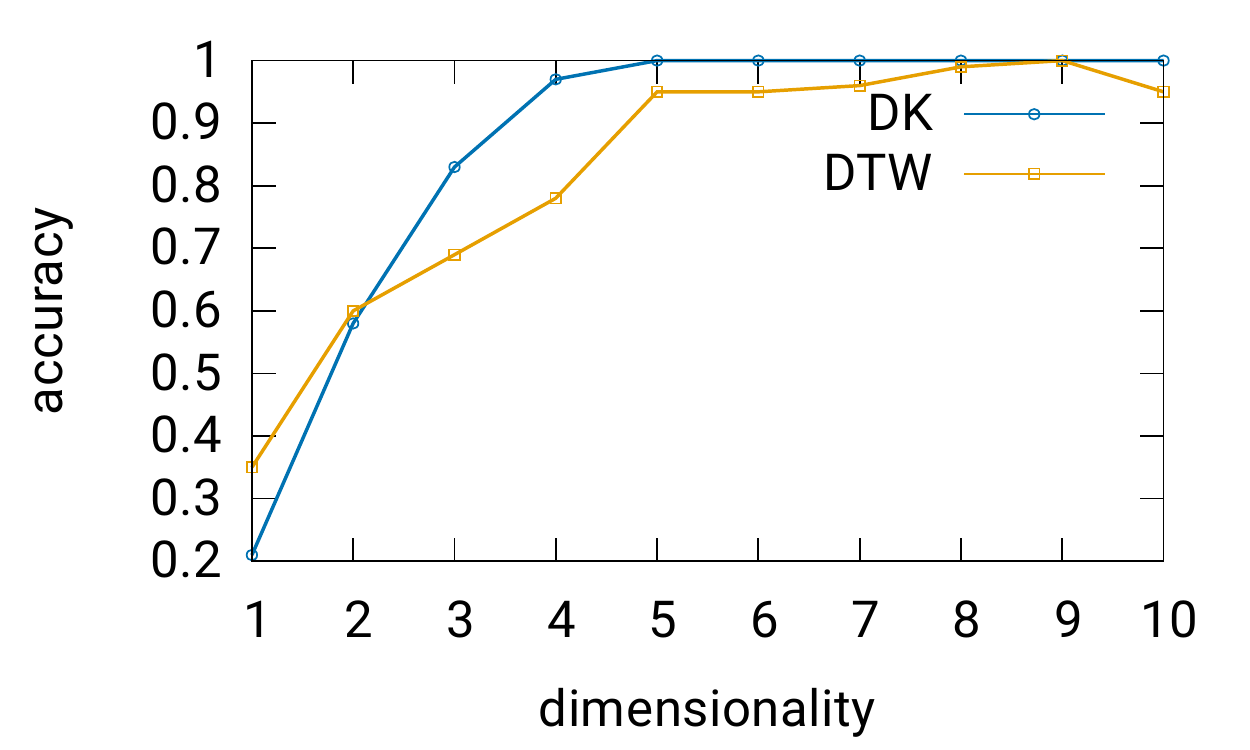}
    \caption{
        Accuracy of \frechet{} and \dtw{} with \lbbox{} on the \ram{} data set. Dimensionality: 10 (left); Length: 100 (right).
    }
    \label{fig:accuracy_ram_len_dim}
\end{figure}

Columns~1,~3~and~4 of Table~\ref{tbl:realdatasets} also prove that there is no clear winner regarding accuracy.

\section{Conclusion}
\label{sec:conclusion}

We introduced \lbbox{} as a canonical extension to Keogh's lower bound \lbkeogh{} for \dtw{} on multi-dimensional time series and proved its correctness.
Not only do lower bounds suffer from the curse of dimensionality even if their tightness remains constant, but we also proved that the tightness of \lbbox{} decreases with growing dimensionality.
On the other hand, we proposed an alternative algorithm for the computation of the long known \frechet{} distance, which is similar to \dtw{}.
Please note that the \frechet{} distance satisfies the triangle inequality and can thus be used in metric indexes \cite{frechet,WDK17}.

In our evaluation, we confirmed our theory that \lbbox{} as extension of \lbkeogh{} suffers from the curse of dimensionality.
We could show that our implementation of \frechet{} outperforms \lbbox{} on multi-dimensional synthesized data sets as well as real world data sets in terms of computation time by more than one order of magnitude.
However, there is no clear winner regarding the accuracy in retrieval tasks.
Hence, we propose to stay with \lbkeogh{} on 1-dimensional time series while choosing the \frechet{} distance on multi-dimensional time series.

\balance

\section{Acknowledgments}
We like to thank Jochen Taeschner for his help on this work.

\bibliographystyle{abbrv}
\bibliography{literature}

\end{document}